\newcommand{\kb}{\mathbf{k}}
\newcommand{\xb}{\mathbf{x}}
\def\1{{\mathbbm 1}}
\newcommand{\supp}{{\rm{supp}}}
\def\lcm{{\rm lcm}}
\title{Redundancy analysis using lcm-filtrations: networks, system signature and sensitivity evaluation}
\theoremstyle{plain}%
\newtheorem{theorem}{Theorem}
\numberwithin{theorem}{section}
\newtheorem{remark}[theorem]{Remark}
\newtheorem{definition}[theorem]{Definition}
\theoremstyle{definition}
\newtheorem{example}{Example}
\theoremstyle{remark}
\newtheorem{corollary}[theorem]{Corollary}
\newtheorem{lemma}[theorem]{Lemma}
\author{Fatemeh Mohammadi, Eduardo S\'aenz-de-Cabez\'on, and Henry Wynn}
\date{}
\begin{document}

\maketitle
\begin{abstract}
We introduce the lcm-filtration and stepwise filtration, comparing their performance across various scenarios in terms of computational complexity, efficiency, and redundancy. The lcm-filtration often involves identical steps or ideals, leading to unnecessary computations. To address this, we analyse how stepwise filtration can effectively compute only the non-identical steps, offering a more efficient approach. We compare these filtrations in applications to networks, system signatures, and sensitivity analysis.  
\end{abstract}

\maketitle
\section{Introduction}
The structure of least common multiples of sets of minimal generators of monomial ideals is a key combinatorial tool both in the study of these ideals~\cite{GPV99,MS05,C10,M13} and in their applications. Computing this structure is a computationally demanding task, and it sometimes introduces redundancy that is undesirable in certain situations, particularly when it leads to the expenditure of computing resources on meaningless computations. Some applications of monomial ideals in which their lcm-structure is useful include the analysis of simultaneous failures in coherent systems \cite{MSW17}, multivariate signature analysis \cite{MSW18}, and sensitivity analysis of preference functions \cite{divason2023sensitivity}.

In all of these applications (and their computational aspects), the lcm-filtration was used. The redundancy introduced by this filtration is, in these cases, both a feature and an issue. In this paper, we introduce the stepwise lcm-filtration as a way to handle the least common multiple structure of sets of generators of monomial ideals. This approach can be faster and less redundant but sometimes coarser than the usual lcm-filtration (see Section~\ref{sec:concepts}). Subsequently, we dedicate the rest of the sections to analysing the conditions, in various applications, under which redundancy provides meaningful information and those under which non-redundant computations are more useful or feasible. 

Section~\ref{sec:graphs} focuses on cut ideals of graphs, as used in network theory~\cite{liu2018cut}. We demonstrate that the proposed stepwise filtration exhibits favourable properties in the analysis of sparse networks and graphs, and we describe a phase transition for random graphs concerning the behaviour of the redundancy in the lcm-structure of their corresponding cut ideals. Another application of these ideas pertains to $k$-out-of-$n$ systems and their significant variants: linear and circular consecutive $k$-out-of-$n$ systems. In Section~\ref{sec:simultaneous}, we show that their behaviour with respect to redundancy differs, making the combined use of the stepwise lcm-filtration and the usual one both useful and convenient for the optimal treatment of these systems. 

Finally, Section~\ref{sec:sensitivity} addresses sensitivity analysis, where the stepwise filtration and the usual lcm-filtration complement each other in two ways. First, on a global analysis of the sensitivity of a model, the stepwise filtration avoids processing equal steps in the filtration when such information is unnecessary and also offers a direct interpretation of the filtration steps in a simplicial formulation, enhancing the understanding of factor interactions in simplicial terms. Second, focusing in the local features at each sensitive corner, the redundancy provided by the usual lcm-filtration is better suited to analyse the differences among complexes having the same f-vector and Betti numbers by means of distances between persistence diagrams associated to these filtrations.

\section{Main algebraic concepts}\label{sec:concepts}
\subsection{Stanley-Reisner ideals of simplicial complexes}

A foundational result in combinatorial commutative algebra is the correspondence between the algebraic invariants of monomial ideals and the topological properties of simplicial complexes. This relationship is formalized through Hochster's formula, which connects the multigraded Betti numbers of a monomial ideal to the homology of related simplicial complexes \cite{hochster1977cohen}.

A simplicial complex $\Delta$ on a finite vertex set $V$ is a collection of subsets of $V$, called faces, such that if a set $\sigma$ belongs to $\Delta$, then every subset of $\sigma$ also belongs to $\Delta$. The maximal elements of $\Delta$ with respect to inclusion are called facets, and the dimension of a face is one less than the number of its vertices.

Given a simplicial complex $\Delta$ on $n$ vertices, consider the polynomial ring $\kb[x_1, \dots, x_n]$, where each variable $x_i$ corresponds to vertex $i$. The associated {\it Stanley–Reisner ideal} is defined as  
\[
I_\Delta = \langle \xb^\sigma : \sigma \notin \Delta \rangle,
\]  
where the generators correspond to the minimal non-faces of $\Delta$. Each monomial $\xb^\sigma$ is formed by taking the product of variables indexed by the elements of $\sigma$. By construction, $I_\Delta$ is a squarefree monomial ideal.
Conversely, every squarefree monomial ideal $I$ defines a simplicial complex $\Delta_I$ consisting of all subsets $\sigma$ for which the monomial $\xb^\sigma$ is not in $I$. The two constructions are inverse to each other in the sense that $I_{\Delta_I} = I$.

Hochster’s formula then provides a way to compute the Betti numbers of $I_\Delta$ using the reduced cohomology of subcomplexes induced on subsets $\sigma$.
This connection, which is at the heart of Stanley-Reisner theory, plays a central role in linking monomial ideals and combinatorial topology. However, it applies directly only to squarefree ideals. To apply similar techniques to arbitrary monomial ideals, one uses the polarization process described in \cite{herzog2011monomial, mohammadi2020polarization}. Polarization transforms a monomial ideal into a squarefree ideal in a larger polynomial ring, allowing the application of simplicial methods to the polarized version.

\subsection{The lcm-lattice}

Suppose we have a set of monomials  
$M = \{\xb^{\mu_1}, \dots, \xb^{\mu_r}\}$ in the polynomial ring $\kb[\xb] = \kb[x_1, \dots, x_n]$ over a field $\kb$. For each variable $x_i$, let $\alpha_i$ denote the largest exponent of $x_i$ among the monomials in $M$. The least common multiple (lcm) of the elements in $M$ can then be computed as  
$\lcm(M) = \prod_{i} x_i^{\alpha_i}$.  Recall that the minimal generating set of every monomial ideal is uniquely determined.

\begin{definition}
Let $I$ be a monomial ideal in $\kb[\xb]$ with a minimal monomial generating set $G(I) = \{m_1,\dots,m_r\}$.
Define the set  
$L_I = \{\lcm(\{m_i : i \in \sigma\}) \mid \sigma \subseteq \{1,\ldots,r\}\}$.
The set $L_I$, ordered by divisibility, forms a finite atomic lattice, called the $\lcm$-lattice of $I$.
\end{definition}



The $\lcm$-lattice has been extensively studied in the context of free resolutions of monomial ideals and their connections to atomic lattices. For details, see \cite{GPV99, MS05, P06, M13, C22} and references therein.

\subsection{The \lcm-filtrations of a monomial ideal}
\begin{definition}
Let $I$ be a monomial ideal. A chain of ideals $I_1\subseteq I_2\subseteq\cdots\subseteq I_k=I$ is called a {\em filtration} of $I$. Similarly, a chain of ideals $I=I_1\supseteq I_2\supseteq\cdots\supseteq I_k$ is called a {\em reverse filtration} or a {\em descending} filtration of $I$.
\end{definition}

%

\begin{definition}\label{def:filtration}
Let $I \subseteq \kb[\xb]$ be a monomial ideal, and let $G(I) = \{m_1, \dots, m_r\}$ denote the minimal monomial generating set of $I$.  We define:

\begin{itemize}
    \item[{\rm (i)}]  The $k$-fold $\lcm$-ideal of $I$:   
    For each $k$, let the $k$-fold $\lcm$-ideal of $I$, denoted $I_k$, be the ideal generated by the least common multiples of all sets of $k$ distinct monomial generators of $I$:
    \[
    I_k = \big\langle m_\sigma : \sigma \subseteq \{1, \dots, r\}, \ |\sigma| = k \big\rangle,
    \]  
    where $m_\sigma=\lcm\big(\{m_i\}_{i \in \sigma}\big)$. The sequence of ideals $\{I_k\}$ forms a descending filtration:  
    \[
    I = I_1 \supseteq I_2 \supseteq \cdots \supseteq I_r,
    \]  
    referred to as the \emph{(usual) $\lcm$-filtration} of $I$. 

    \item[{\rm (ii)}]   The stepwise $\lcm$-filtration:   
    This alternative filtration is defined iteratively. Set $\underline{I}_1 = I$, and for each $k \geq 2$, define  
    \[
    \underline{I}_k = \big\langle \lcm(m_i, m_j) : m_i, m_j \in G(\underline{I}_{k-1}), \ i\neq j\big\rangle,  
    \]  
    where $G(\underline{I}_{k-1})$ denotes the minimal generating set of $\underline{I}_{k-1}$.  
\end{itemize}
\end{definition}

\begin{example}\label{rem:finalStep}{\rm 
Observe that the usual and stepwise $\lcm$-filtrations can differ in both the number of ideals and the support of the ideals within them. For example, consider the ideal \( I = \langle abc, bd, cd, e \rangle \subseteq \kb[a, b, c, d, e] \), whose Stanley-Reisner complex \(\Delta_I\) has the set of facets \(\{ab, ac, ad, bc\}\).
The steps of the $\lcm$-filtration are 
\[
I_1=I,\ I_2=\langle abce,bcd,bde,cde\rangle,\ I_3=\langle abcd,bcde\rangle\ \text{and}\ I_4=\langle abcde\rangle.
\]
On the other hand, the steps of the stepwise $\lcm$-filtration are \[
\underline{I}_1=I,\ \underline{I}_2=I_2,\ \text{and}\ \underline{I}_3=\langle bcde\rangle.
\]
}
\end{example}

The $\lcm$-lattice-based filtrations of a monomial ideal $I$ provide valid structural filtrations, because they are based on the $\lcm$-lattice of $I$. The underlying concept of these filtrations is to investigate the changes in the features of the ideal when considering sets of generators instead of individual generators. For example, in \cite{SW09, SW15}, monomial ideals have been used to study failure events and the reliability of coherent systems, where each monomial generator represents a basic working or failure event in the system. To analyze simultaneous events and conduct signature analysis of coherent systems, ideals generated by successively taking least common multiples of a monomial ideal have been applied in \cite{divason2023sensitivity, MSW18, MSW17}.

\subsection{The simplicial form of the stepwise $\lcm$-filtration}
The stepwise $\lcm$-filtration of a monomial ideal $I$ in $\kb[x_1, \dots, x_n]$ is also a reverse filtration, which induces a corresponding simplicial filtration on the Stanley-Reisner simplicial complex $\Delta_I$ associated with $I$. To describe this filtration, let $\underline{\Delta} = \Delta_I = \Delta_{\underline{I}_1}$ be a simplicial complex on the vertex set $\{1,\ldots,n\}$. 
The filtration of $\Delta_I$ is defined as follows:  
Starting with $\underline{\Delta} = \underline{\Delta}_1$, for each $k > 1$, we set  
\begin{equation}\label{eq:simplicial_filtration}
\underline{\Delta}_{k+1} = \underline{\Delta}_k \cup \{F \subseteq \{1,\ldots,n\} : |F^{[|F|-1]} \cap \Delta_k| \geq |F| - 1\}.
\end{equation}
Here, $F^{[|F|-1]}$ denotes the collection of all $(|F|-1)$-subsets of $F$.
In other words, to construct $\Delta_{k+1}$, we add a subset $F \subseteq \{1,\ldots,n\}$ to $\Delta_k$ if and only if all but at most one of its $(|F|-1)$-subsets are in $\Delta_k$.

\medskip
We now demonstrate that the stepwise lcm-filtrations defined on monomial ideals in Definition~\ref{def:filtration}(ii) and on simplicial complexes in \eqref{eq:simplicial_filtration} are compatible.

\begin{theorem}\label{thm:simplicial} Let $\underline{\Delta}$ be a simplicial complex on $\{1,\ldots,n\}$ with the associated Stanley-Reisner ideal $I_{\underline{\Delta}}=\underline{I}_1$. Then for all $k$ we have  $I_{\underline{\Delta}_{k}}=\underline{I}_k$. 
\end{theorem}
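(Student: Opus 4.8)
The plan is to prove the claim $I_{\underline{\Delta}_k} = \underline{I}_k$ by induction on $k$, with the base case $k=1$ being the hypothesis $I_{\underline{\Delta}} = \underline{I}_1$. For the inductive step, I would assume $I_{\underline{\Delta}_k} = \underline{I}_k$ and show $I_{\underline{\Delta}_{k+1}} = \underline{I}_{k+1}$. The essential bridge is a translation dictionary between the two operations: on the algebra side, passing from $\underline{I}_k$ to $\underline{I}_{k+1}$ replaces the minimal generators by pairwise lcm's; on the combinatorial side, passing from $\underline{\Delta}_k$ to $\underline{\Delta}_{k+1}$ adds every $F$ almost all of whose codimension-one subsets already lie in $\Delta_k$. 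I would first recall the precise correspondence: for a squarefree monomial ideal $J$ with Stanley–Reisner complex $\Delta_J$, the squarefree minimal generators of $J$ are exactly $\xb^{\tau}$ where $\tau$ is a minimal non-face of $\Delta_J$, and a set $\sigma$ is a non-face of $\Delta_J$ iff $\xb^{\sigma} \in J$ iff $\xb^{\sigma}$ is divisible by some minimal generator.

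The technical heart is the following equivalence, which I would isolate as the key step: for a set $F \subseteq \{1,\ldots,n\}$, the monomial $\xb^{F}$ lies in $\underline{I}_{k+1}$ if and only if all but at most one of the $(|F|-1)$-subsets of $F$ are non-faces of $\Delta_k$ (equivalently, correspond to monomials in $\underline{I}_k$). One direction: if $\xb^{F} \in \underline{I}_{k+1}$, then $\xb^F$ is divisible by some $\lcm(m_i, m_j)$ with $m_i, m_j \in G(\underline{I}_k)$ distinct; here $m_i = \xb^{\tau_i}$, $m_j = \xb^{\tau_j}$ with $\tau_i, \tau_j$ minimal non-faces of $\Delta_k$, and $\tau_i \cup \tau_j \subseteq F$. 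One must argue that having two distinct non-faces contained in $F$ forces at least $|F|-1$ of the $(|F|-1)$-subsets of $F$ to be non-faces: a $(|F|-1)$-subset $F \setminus \{v\}$ fails to contain $\tau_i \cup \tau_j$ only when $v \in \tau_i \cup \tau_j$, but since $\tau_i \cup \tau_j$ has at least two elements, there is at most one vertex $v$ whose removal kills both $\tau_i$ and $\tau_j$ simultaneously — wait, more carefully, one needs that at most one $(|F|-1)$-subset misses the union, which holds precisely when the symmetric difference structure is controlled; this is where the hypothesis that $I$ is squarefree and the generators are the minimal non-faces does the work, and where I'd need to handle the subtlety that $F$ itself might be strictly larger than $\tau_i \cup \tau_j$.

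For the converse, suppose all but at most one $(|F|-1)$-subset of $F$ is a non-face of $\Delta_k$. Then $F$ contains at least two distinct minimal non-faces of $\Delta_k$ (take two of the non-face $(|F|-1)$-subsets, or if a non-face subset itself is minimal, etc.), call them $\tau_i, \tau_j$; one checks $\xb^{\tau_i \cup \tau_j} \mid \xb^F$ is not automatic unless $\tau_i \cup \tau_j = F$, so instead one argues that $\xb^F$ itself is the lcm of two minimal generators of $\underline{I}_k$ obtained from two distinct codimension-one non-faces of $F$ — indeed if $F \setminus \{u\}$ and $F \setminus \{v\}$ ($u \neq v$) are both non-faces, their lcm is $\xb^F$ exactly, and each contains a minimal non-face, but one must promote this to: $\lcm$ of the two \emph{minimal} generators dividing them still divides $\xb^F$, and conversely $\xb^F \in \underline{I}_{k+1}$ as long as $\xb^F$ is a multiple of such an lcm. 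I also need to verify that the minimal generating set $G(\underline{I}_{k+1})$ consists of squarefree monomials so that $\underline{I}_{k+1}$ is again a Stanley–Reisner ideal and the induction can continue; this follows since lcm's of squarefree monomials are squarefree.

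The main obstacle I anticipate is the bookkeeping in the key equivalence above — specifically, correctly relating ``$F$ contains two distinct minimal non-faces'' to ``at least $|F|-1$ of the $(|F|-1)$-subsets of $F$ are non-faces,'' because a non-face of $\Delta_k$ need not be one of the listed codimension-one subsets of $F$, and one must carefully pass between arbitrary non-faces, minimal non-faces, and the specific $(|F|-1)$-subsets in the definition~\eqref{eq:simplicial_filtration}. Once that combinatorial lemma is pinned down cleanly (most naturally by a counting argument on which vertices $v$ make $F \setminus \{v\}$ a face), the rest is a routine unwinding of the Stanley–Reisner dictionary, and the induction closes.
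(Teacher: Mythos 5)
Your overall strategy --- induct on $k$, reduce to a single step, and unwind the Stanley--Reisner dictionary for that step --- is the same as the paper's, which reduces to $k=2$ and runs a chain of equivalences between $F\notin\underline{\Delta}_2$ and $\xb^F\in\underline{I}_2$. However, your central ``key equivalence'' is mis-stated, and the one genuinely delicate point is left open. First, you negate the filtration rule incorrectly: $F$ is added to $\underline{\Delta}_{k+1}$ when all but at most one of its $(|F|-1)$-subsets are \emph{faces} of $\underline{\Delta}_k$, so $\xb^F\in I_{\underline{\Delta}_{k+1}}$ corresponds to ``at least two of the $(|F|-1)$-subsets are non-faces'', not to ``all but at most one are non-faces''. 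With your version the forward direction already fails: for $\underline{I}_k=\langle x_1x_2x_3,\,x_1x_2x_4\rangle$ and $F=\{1,2,3,4\}$ one has $\xb^F=\lcm(x_1x_2x_3,x_1x_2x_4)\in\underline{I}_{k+1}$, yet two of the four $3$-subsets of $F$ (namely $\{1,3,4\}$ and $\{2,3,4\}$) are faces. The correct count in that direction is the ``at least two'' one: since distinct minimal non-faces $\tau_i,\tau_j$ are incomparable, picking $v\in\tau_i\setminus\tau_j$ and $w\in\tau_j\setminus\tau_i$ gives two codimension-one non-faces $F\setminus v\supseteq\tau_j$ and $F\setminus w\supseteq\tau_i$.

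Second, the converse step --- promoting ``$F\setminus u$ and $F\setminus v$ are non-faces of $\underline{\Delta}_k$'' to ``two \emph{distinct} minimal generators of $\underline{I}_k$ whose lcm divides $\xb^F$'' --- is exactly where your sketch stalls (``one must promote this to\ldots''), and it is not routine bookkeeping: both codimension-one non-faces may be witnessed by the \emph{same} minimal non-face. For instance, if the minimal non-faces of $\underline{\Delta}_k$ are $\{1,2\}$ and $\{5,6\}$ and $F=\{1,2,3,4\}$, then $F\setminus 3$ and $F\setminus 4$ are both non-faces, but the only minimal generator of $\underline{I}_k$ dividing $\xb^F$ is $x_1x_2$, so no lcm of two distinct generators divides $\xb^F$. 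What $\xb^F\in\underline{I}_{k+1}$ actually encodes is that $F$ contains two \emph{distinct} minimal non-faces of $\underline{\Delta}_k$, and a complete proof must reconcile that condition with the codimension-one counting rule in \eqref{eq:simplicial_filtration}; this is precisely the step the paper's displayed chain of equivalences carries out in one line when it inserts ``and $\lcm(\xb^{F_1},\xb^{F_2})\in\underline{I}_2$'', whereas your proposal neither states the right condition nor closes this step. Until the problematic configuration (a unique minimal non-face of $F$ with at least two vertices of $F$ outside it) is handled, the argument has a genuine gap.
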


\begin{proof}
First note that, as we have
\[
\underline{I}_{k}=\langle {\rm lcm}(m_i,m_j):\ m_i,m_j\in G(\underline{I}_{k-1})\ {\rm and}\ i<j\rangle,
\]
it is enough to prove the statement for $k=2$. Assume that $k=2$. Let $\textsf{x}^F $ be the squarefree monomial ideal with the support $F$. Then:
\begin{align*}
\textsf{x}^F \in I_{\underline{\Delta}_2} 
&\Leftrightarrow F \not\in \underline{\Delta}_2 
\Leftrightarrow \exists\ i, j \in F\ \ \textit{s.t.}\ \ F \backslash i, F \backslash j \not\in \underline{\Delta}_1 \\
&\Leftrightarrow \exists\ F_1, F_2 \subset F\ \ \textit{s.t.}\ \ \textsf{x}^{F_1}, \textsf{x}^{F_2} \in I_{\underline{\Delta}_1} \\
&\Leftrightarrow \exists\ F_1, F_2 \subset F\ \ \textit{s.t.}\ \ \textsf{x}^{F_1}, \textsf{x}^{F_2} \in I_{\underline{\Delta}_1} \ {\rm and}\ {\rm lcm}(\textsf{x}^{F_1}, \textsf{x}^{F_2}) 
\in \underline{I}_2 \\
&\Leftrightarrow \textsf{x}^F \in \underline{I}_2 \ .
\end{align*}
The assertion follows from the above equivalencies.
\end{proof}

Theorem~\ref{thm:simplicial} provides a correspondence between the stepwise lcm-filtration of an ideal $I$ and the filtration of the Stanley-Reisner simplicial complex $\Delta_I$. Consequently, the stepwise lcm-filtration provides a method to construct a filtration of a simplicial complex. To the best of our knowledge, no analogous construction exists for the usual $\lcm$-filtration.

 \subsection{\lcm-filtration versus stepwise lcm-filtration}\label{sub:versus}
The analysis of the lcm structure of a monomial ideal is useful in a wide range of applications. However, the combinatorial nature of these objects makes their size intractable as the number of generators grows. Therefore, different strategies and tools must be used, along with criteria for determining when and why to use each.

\medskip
In general, the lcm-lattice encodes all the interactions between the generators of the ideal. The maximal size of the lcm-lattice for a monomial ideal with $r$ minimal generators is $2^r$, which occurs when the Taylor resolution \cite{T66} of the ideal is minimal, see \cite{A17}. We refer to this as the {\em Taylor lattice} for $r$ generators. Both the usual and stepwise lcm-filtrations consist of a set of ideals whose generators are elements of the lcm-lattice of the ideal. The usual lcm-filtration captures all $k$-fold lcm-ideals and is typically a larger subset of the lcm-lattice than the stepwise lcm-filtration. The relationship between them depends on the specific problem (see \cite{MSW18} and Section~\ref{sec:simultaneous} below). The expected number of generators of the ideal at the $k$-th step in these filtrations is $\textstyle{\binom{r}{k}}$ for the usual lcm-filtration and $\textstyle{\binom{|G(I_{k-1})|}{2}}$ for the stepwise one. Thus, if the information provided by the stepwise filtration is sufficient for the problem at hand, it is crucial to decide which filtration to use and when. In the following sections, we will address this issue in the context of several applications: the cut ideals of networks (Section~\ref{sec:graphs}), signature analysis (Section~\ref{sec:simultaneous}), and consecutive $k$-out-of-$n$ systems (Section~\ref{sec:sensitivity}).

\medskip
Note that, although the stepwise filtration is generally smaller than the usual lcm-filtration, this does not necessarily imply that the former is a subset of the latter, as the following example shows.
\begin{example}\label{ex:differentFiltrations}
    Consider the simplicial complex $\Delta$ with the facets 
    \[\{1, 2, 3\}, \{1, 4\}, \{1, 5\}, \{3, 4\}, \{4, 5\},  \{6, 7\}.\]
The associated ideal is $I=\langle x_2 x_4, x_1 x_3 x_4, x_2 x_5, x_3 x_5, x_1 x_4 x_5, x_1 x_6, \\ x_2 x_6,x_3 x_6, x_4 x_6, x_5 x_6, x_1 x_7, x_2 x_7, x_3 x_7, x_4 x_7, x_5 x_7\rangle.
$
The lcm-filtration consists of 15 steps, while the stepwise lcm-filtration contains only 6 ideals. In the lcm-filtration, we observe the following equalities:  
\[
\underline{I}_1 = I_1, \ \ \underline{I}_2 = I_2, \ \ \underline{I}_5 = I_9 = I_{10}, \ \ \text{and} \ \ \underline{I}_6 = I_{12} = I_{13} = I_{14} = I_{15}.
\]  
However, the ideals $I_3, I_4, I_5, I_6, I_7, I_8$, and $I_{11}$ are distinct from all other $k$-fold ideals and any ideal $\underline{I}_k$. This shows that, although the filtrations of the  complex $\Delta$ using the lcm and stepwise lcm methods eventually coincide, their intermediate steps can differ significantly.
\end{example}



\section{The 
cut ideals of graphs}\label{sec:graphs}
Let $G$ be a graph with the vertex set $V=\{1,\ldots,i\}$ and the edge set $E$.  For a $j$-partition $C=A_1\vert A_2\vert\ldots\vert A_j$ of $V$, consisting of $j$ disjoint nonempty subsets $A_k\subset V$ we define  
\[
E(C) = \{\{a,b\} \in E: a\in A_k,\ b\in A_\ell\text{ for some } 1\leq k<\ell\leq j\}.
\]  

A $j$-partition $C = A_1\vert A_2\vert\ldots\vert A_j$ of the set $\{1, \dots, i\}$ is called a $j$-cut of a graph $G$ if the subgraph induced by each part $A_k$ ($1 \leq k \leq j$) is connected. Let $\mathcal{P}_{i,j}(G)$ denote the set of $j$-partitions of $G$. 
\medskip

Let $\kb$ be a field, and let $R = \kb[x_e : e \in E]$ denote the polynomial ring in the variables corresponding to the edges of $G$. We associate the monomial  
\[
m_C = \prod_{e \in E(C)} x_e
\]  
to each cut $C = A_1\vert A_2\vert\ldots\vert A_j$ of $G$. 
The {\em cut ideal}  $\mathcal{C}_G$ of $G$ is the ideal in $R$ generated by all $m_C$ for $2$-cuts $C$ of $G$.
Let $P_{i,j}$ denote the ideal minimally generated by the monomials $m_C$, where $C$ is a $j$-partition in $\mathcal{P}_{i,j}(G)$. 
In words, the ideal $P_{i,j}$ has a minimal generator corresponding to each $j$-partition $C$ in $\mathcal{P}_{i,j}(G)$. 



\begin{definition} Let $\delta=\delta_1\vert\ldots\vert\delta_k$ and $\tau=\tau_1\vert\ldots\vert\tau_\ell$ be two partitions of the set $\{1,\ldots,i\}$. Then 
\begin{itemize}
\item
$\delta$ and $\tau$ are  {\it compatible} if $k=\ell$ and for each $r$ there exists $s$ such that $\delta_r\subseteq \tau_s$ or $\tau_s\subseteq\delta_r$.

\item
$\delta$ is a {\it refinement} of $\tau$ if $\ell<k$ and for each $r$ there exists $s$ such that $\delta_r\subseteq \tau_s$.

\item the union of two compatible partitions is defined as $\sigma=\sigma_1\vert\ldots\vert\sigma_{d}$ when for each $r$, with $1\leq r\leq d$, there exist $i$ and $j$ such that $\sigma_r=\delta_i=\tau_j$ or $\sigma_r=\delta_i\subset \tau_j$ or $\sigma_r=\tau_j\subset\delta_i$.
\end{itemize}
\end{definition}

We now focus on two extreme families of graphs: complete graphs and trees. In Section~\ref{sec:regime}, we will study the transition between these two extreme cases on random graphs.


\medskip
First, we describe the simplicial complexes in the lcm-filtration of the cut ideals of complete graphs as follows. The result follows directly from the definition, and therefore, we omit the proof.
\begin{corollary}
Let $K_i$ be the complete graph on $i$ vertices. Let $P_{i,j}$ be the ideal generated by the $j$-cuts of $K_i$. Let $\Delta_{i,j}$ denote the associated simplicial complex to $P_{i,j}$. Then the facets of $\Delta_{i,j}$ are corresponding to the complement of the $(i-j+1)$-subsets of $K_i$ with no cycle, i.e. $F$ is a facet of $\Delta_{i,j}$ if and only if 
\[
F=E(K_i)\backslash S,\ \textit{where $|S|=i-j+1$ and $S$ does not contain any cycle.}
\]
In particular, the facets of $\Delta_{i,2}$ are corresponding to the complement of the spanning trees of $K_i$, i.e. $F$ is a facet of $\Delta_{i,2}$ if and only if 
\[
F=E(K_i)\backslash T,\ \textit{where $T$ is a spanning tree of\ } K_i.
\]
\end{corollary}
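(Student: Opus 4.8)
The plan is to run everything through the Stanley--Reisner dictionary and reduce the facet description to a statement about spanning forests of $K_i$. First I would note that in the complete graph every induced subgraph is connected, so every $j$-partition of $\{1,\dots,i\}$ is a $j$-cut; hence the minimal generators of $P_{i,j}$ are precisely the monomials $m_C=\prod_{e\in E(C)}x_e$ as $C=A_1\vert\cdots\vert A_j$ ranges over $\mathcal{P}_{i,j}(K_i)$. (These are genuinely the minimal generators: distinct $j$-partitions give incomparable monomials, because the complementary edge set $E(K_i)\setminus E(C)=\bigcup_k\binom{A_k}{2}$ — the union of the cliques on the blocks — recovers $C$ as the partition of $\{1,\dots,i\}$ into the connected components of the graph it defines.) By definition of $\Delta_{i,j}$, a set $F\subseteq E(K_i)$ is a face if and only if $\xb^F$ is divisible by no $m_C$, i.e.\ $E(C)\not\subseteq F$ for every $j$-partition $C$. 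Setting $S=E(K_i)\setminus F$, the inclusion $E(C)\subseteq F$ is equivalent to $S\cap E(C)=\emptyset$, i.e.\ to $S\subseteq E(K_i)\setminus E(C)=\bigcup_k\binom{A_k}{2}$.

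The combinatorial heart is a small lemma I would prove separately: for $S\subseteq E(K_i)$, there is a $j$-partition $A_1\vert\cdots\vert A_j$ with $S\subseteq\bigcup_k\binom{A_k}{2}$ if and only if the graph $G_S=(\{1,\dots,i\},S)$ has at least $j$ connected components. Indeed, if such a partition exists then each connected component of $G_S$ lies inside a single block, so the $j$ nonempty blocks account for at least $j$ components; conversely, if $G_S$ has $c\ge j$ components, one merely groups them into $j$ nonempty blocks. Combining this with the previous paragraph, $F$ is a face of $\Delta_{i,j}$ exactly when $G_S$ has at most $j-1$ connected components.

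It then remains to extract the facets. A face $F$ is maximal iff $S=E(K_i)\setminus F$ is minimal under inclusion among edge sets whose graph has at most $j-1$ components. If a minimal such $S$ contained a cycle, deleting one of its edges would leave the number of components unchanged, contradicting minimality; so $S$ is acyclic. If this forest had at most $j-2$ components, deleting any edge would still leave at most $j-1$ components, again contradicting minimality; so $G_S$ has exactly $j-1$ components, whence $|S|=i-(j-1)=i-j+1$ for a forest on $i$ vertices. Conversely, any acyclic $S\subseteq E(K_i)$ with $|S|=i-j+1$ is automatically a forest with $j-1$ components and is minimal, since removing any edge produces $j$ components. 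This is exactly the asserted description of the facets of $\Delta_{i,j}$, and the case $j=2$ is the statement that the acyclic edge sets with $i-1$ edges are precisely the spanning trees of $K_i$. The only points needing care — and the reason the authors can call the corollary immediate — are the standing assumption $2\le j\le i$ and the handling of singleton blocks, neither of which presents a real obstacle.
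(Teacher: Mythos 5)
Your proof is correct: the paper itself omits the argument, stating the corollary ``follows directly from the definition,'' and your reduction via the Stanley--Reisner dictionary (faces $F$ correspond to complements $S$ whose graph has at most $j-1$ connected components, so minimal such $S$ are exactly the acyclic sets with $i-j+1$ edges) is precisely the routine verification the authors are alluding to. The only cosmetic point is that your aside on minimality of the generators $m_C$ proves injectivity of $C\mapsto m_C$ rather than incomparability, but this is immaterial since the Stanley--Reisner complex depends only on the ideal, not on the chosen generating set.
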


Consider the complete graph $K_i$ with the vertex set $[i]$. 
For each $j$ with $1 \leq j \leq i$, we denote the $j$-fold $\operatorname{lcm}$-ideal of $I$ with $I_{i,j}$. We have the following relations between the ideals $I_{i,j}$ and the ideals $P_{i,j}$ which gives a nice combinatorial description of the lcm-ideals of the cut ideal of the complete graph $K_i$.

\begin{lemma}\label{lem:two}
Let $\delta=\delta_1\vert\ldots\vert\delta_k$ and $\tau=\tau_1\vert\ldots\vert\tau_k$ be two distinct partitions of $K_i$. Then there exists a $(k+1)$-partition $\sigma$ such that $m_\sigma|m_\delta m_\tau$.
\end{lemma}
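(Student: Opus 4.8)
The plan is to compare the two partitions $\delta$ and $\tau$ and build a new partition $\sigma$ that is "finer" than both in a controlled way, so that the edges cut by $\sigma$ all already appear among the edges cut by $\delta$ or $\tau$, and that $\sigma$ uses exactly $k+1$ parts. Since $\delta\neq\tau$ but both have $k$ parts, they are not compatible (in the sense of the definition above), or if they are compatible their union is again a $k$-partition distinct from both; in the generic situation there is some part $\delta_r$ and some part $\tau_s$ with $\delta_r\cap\tau_s\neq\emptyset$ but neither contained in the other. First I would localize on such a "crossing" pair and set $\sigma$ to be the common refinement $\delta\wedge\tau$ (the partition whose blocks are the nonempty intersections $\delta_a\cap\tau_b$), then coarsen it back up by merging blocks until exactly $k+1$ parts remain, being careful that the merges only re-connect pieces lying inside a single $\delta_a$ or inside a single $\tau_b$.

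The key observation that makes $m_\sigma \mid m_\delta m_\tau$ work is an edge-counting/containment argument: an edge $e=\{a,b\}$ lies in $E(\sigma)$ only if $a,b$ fall in different blocks of $\sigma$; if $\sigma$ refines $\delta$ then either $a,b$ are separated by $\delta$ (so $x_e \mid m_\delta$) or they lie in the same $\delta$-block but in different $\sigma$-blocks. In the latter case I need those residual edges to be covered by $\tau$, which is exactly why $\sigma$ should be chosen so that two vertices in the same $\delta$-block but different $\sigma$-blocks are separated by $\tau$ — i.e. $\sigma$ should be at least as fine as $\tau$ on each $\delta$-block. So the real content is: produce a partition $\sigma$ with exactly $k+1$ parts that simultaneously coarsens the common refinement $\delta\wedge\tau$ (to keep edge-coverage) but is fine enough that $E(\sigma)\subseteq E(\delta)\cup E(\tau)$. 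The divisibility $m_\sigma \mid m_\delta m_\tau$ then follows because each variable $x_e$ with $e\in E(\sigma)$ divides $m_\delta$ or $m_\tau$, and $m_\sigma$ is squarefree so no exponent issues arise.

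Concretely I would proceed as follows. Consider the common refinement $\delta\wedge\tau$; it has at least $k+1$ blocks, since $\delta\neq\tau$ forces some $\delta$-block to split. Among all partitions $\sigma$ with $\sigma\wedge\tau=\tau$ restricted appropriately — more simply, among all coarsenings of $\delta\wedge\tau$ that still refine neither— I pick one with exactly $k+1$ parts obtained from $\delta$ by splitting a single block $\delta_r$ into two pieces $\delta_r'$, $\delta_r''$ along the $\tau$-partition (possible precisely because some $\tau_s$ crosses $\delta_r$, so $\delta_r\cap\tau_s$ and $\delta_r\setminus\tau_s$ are both nonempty). Set $\sigma = \delta_1\vert\cdots\vert\delta_{r-1}\vert\delta_r'\vert\delta_r''\vert\delta_{r+1}\vert\cdots\vert\delta_k$ with $\delta_r' = \delta_r\cap\tau_s$ and $\delta_r'' = \delta_r\setminus\tau_s$. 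Then $\sigma$ has $k+1$ parts, and $E(\sigma)=E(\delta)\cup\{\text{edges between }\delta_r'\text{ and }\delta_r''\}$; every such new edge $\{a,b\}$ has $a\in\tau_s$, $b\notin\tau_s$, hence $\{a,b\}\in E(\tau)$, giving $x_{\{a,b\}}\mid m_\tau$. Therefore $E(\sigma)\subseteq E(\delta)\cup E(\tau)$ and $m_\sigma\mid m_\delta m_\tau$.

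The main obstacle I anticipate is the case where $\delta$ and $\tau$ are compatible in the technical sense, so that no single $\tau_s$ strictly crosses a $\delta_r$ — then the naive split above is unavailable because every $\delta_r$ is a union of $\tau$-blocks (or contained in one). In that case the union partition $\sigma=\delta\cup\tau$ from the definition has fewer than $k$ parts, which is the wrong direction, so instead I would argue directly: since $\delta\neq\tau$ and both have $k$ parts, there must be a $\delta$-block that is a proper union of at least two $\tau$-blocks; split that block into one of those $\tau$-blocks and the rest, again landing exactly one part above $\delta$, and the same edge-coverage argument goes through since the splitting edges are separated by $\tau$. Handling this bifurcation cleanly — and verifying in both branches that the resulting $\sigma$ really is a partition with the claimed number of parts and the claimed edge set — is the step that needs the most care, but it is a finite case analysis rather than anything deep.
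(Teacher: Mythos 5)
Your construction --- splitting a single block $\delta_r$ into $\delta_r\cap\tau_s$ and $\delta_r\setminus\tau_s$ and observing that every new cut edge has one endpoint in $\tau_s$ and one outside, hence is already cut by $\tau$ --- is exactly the paper's proof, and it is correct. The case bifurcation you worry about is unnecessary: whenever $\delta\neq\tau$ are distinct $k$-partitions there exist $r,s$ with $\delta_r\cap\tau_s$ and $\delta_r\setminus\tau_s$ both nonempty (otherwise every $\delta_r$ lies in some $\tau_s$, so $\delta$ refines $\tau$ and, having the same number of parts, equals it), and this condition already covers the situation $\tau_s\subsetneq\delta_r$, so your ``compatible'' branch is the same split.
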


\begin{proof}
First note that there exists $r$ and $s$ such that the sets $\delta_r\backslash\tau_s$ and $\delta_r\cap \tau_s$ are  not empty. We set $\sigma=\sigma_1\vert\cdots\vert\sigma_{k+1}$ to be the $(k+1)$-partition with 
\[
\sigma_1=\delta_1,\ldots,\sigma_{r-1}=\delta_{r-1}, \sigma_r=\delta_r\cap\tau_s,\quad\text{and}
\] 
\[
\sigma_{r+1}=\delta_r\backslash\tau_s,\sigma_{r+2}=\delta_{r+1},\ldots,\sigma_{k+1}=\delta_k.
\]
Since $m_{\sigma}=m_{\delta}\prod_{a\in \delta_r\backslash\tau_s\atop  b\in \delta_r\cap\tau_s} x_{ab}$ and  $m_\tau|\prod_{a\in \delta_r\backslash\tau_s\atop  b\in \delta_r\cap\tau_s} x_{ab}$ we have that $m_\sigma|m_\delta m_\tau$.
\end{proof}

Before stating the main result, we recall that the Stirling number of the second kind, denoted by $S(n, k)$, represents the number of ways to partition a set of $n$ elements into $k$ non-empty, disjoint subsets.
\begin{theorem}
Let $P_{i,k+1}$ be the ideal generated by the $(k+1)$-cuts of $K_i$. For all $k\geq 1$, we have $I_{i,2^{k-1}}=\cdots=I_{i,2^k-1}=P_{i,k+1}$. Moreover, the number of generators of each ideal $P_{i,k+1}$ is  
the Stirling number of the second kind, 
$S(n,k
+1)$. 
\end{theorem}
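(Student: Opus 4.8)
The plan is to prove the two claims separately: first the chain of equalities $I_{i,2^{k-1}}=\cdots=I_{i,2^k-1}=P_{i,k+1}$, and then the count of generators. For the first part, the key structural fact is that an lcm of cut monomials of $K_i$ is again (divisible down to) a cut monomial of a coarser partition, and that taking lcm's of $(k+1)$-partition monomials stabilises exactly at $P_{i,k+1}$. I would proceed by induction on $k$. The base case $k=1$ is $I_{i,1}=P_{i,2}$, which is the definition of the cut ideal. For the inductive step, I would use Lemma~\ref{lem:two}: the lcm of the monomials of any two distinct $k$-partitions is divisible by the monomial of some $(k+1)$-partition, so iterating lcm's of generators of $P_{i,k}$ lands inside $P_{i,k+1}$. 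Conversely, every $(k+1)$-partition monomial $m_\sigma$ must be realised as a divisor of an lcm of $k$-partition monomials (merging one block of $\sigma$ with each of the others in turn, or a similar explicit splitting argument), so $P_{i,k+1}\subseteq I_{i,j}$ for the relevant range of $j$, and minimality of the generating set gives equality. The "doubling" range $2^{k-1}\le j\le 2^k-1$ comes from tracking how many generators of $K_i$ one must lcm together to guarantee reaching a $(k+1)$-cut but not a $(k+2)$-cut: with $j$ generators one can merge blocks roughly $\log_2 j$ times, which forces the index of the stable ideal to be $\lfloor\log_2 j\rfloor+1$. I would make this precise by showing (a) any lcm of $j$ cut monomials is divisible by some $m_\sigma$ with $\sigma$ having at least $\lfloor\log_2 j\rfloor+1$ blocks, and (b) $\lfloor\log_2 j\rfloor+1=k+1$ exactly when $2^{k-1}\le j\le 2^k-1$, hence $I_{i,j}=P_{i,k+1}$ throughout that range.

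For the count, I would show that the minimal generators of $P_{i,k+1}$ are in bijection with the set partitions of $[i]$ into $k+1$ blocks, which by definition is counted by the Stirling number $S(i,k+1)$. Here one must be careful: a priori $P_{i,k+1}$ is generated by the $m_C$ over all $(k+1)$-\emph{cuts} (partitions whose blocks induce connected subgraphs), but in $K_i$ every block induces a complete, hence connected, subgraph, so every $(k+1)$-partition is a $(k+1)$-cut. Thus there is one generator $m_C$ per $(k+1)$-partition $C$. The remaining point is that these monomials are \emph{distinct} and form a \emph{minimal} generating set: distinctness follows because $m_C$ records exactly the "crossing" edges of $C$, and a partition of $[i]$ is recovered from its set of crossing edges in $K_i$ (the complement edges are the within-block edges, whose connected components are the blocks); minimality follows from the first part, since $P_{i,k+1}=I_{i,2^{k-1}}$ is a genuine lcm-ideal whose minimal generators we have identified. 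Therefore the number of minimal generators of $P_{i,k+1}$ equals the number of $(k+1)$-partitions of $[i]$, namely $S(i,k+1)$ (with $n=i$ in the paper's notation).

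The main obstacle I anticipate is pinning down the exact index range $2^{k-1}\le j\le 2^k-1$ rather than just "some range of $j$". Lemma~\ref{lem:two} tells us that one lcm operation can increase the block count by exactly one, but the filtration index $j$ counts how many \emph{original} generators are lcm'd simultaneously, not how many iterated pairwise lcm's are taken — so I need the sharper statement that the coarsest partition obtainable as (a divisor of) an lcm of $j$ cut monomials of $K_i$ has precisely $\lfloor\log_2 j\rfloor+1$ blocks. The upper bound direction (cannot do better than $\lfloor\log_2 j\rfloor+1$ blocks) is the delicate one: I would argue that each generator $m_C$ corresponds to a $2$-partition, an lcm of $j$ of them corresponds to the common refinement of $j$ two-block partitions, and the common refinement of $j$ bipartitions of a set has at most $2^{\lceil\log_2(j+1)\rceil}$ — more carefully, at most the number of distinct sign-vectors, which is bounded appropriately — blocks, giving the stated range after matching indices. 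Once the block-count bounds are established, combining them with Lemma~\ref{lem:two} and the base case via induction is routine.
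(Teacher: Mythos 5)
Your overall strategy (identify the lcm of $j$ two-cuts of $K_i$ with the monomial of the common refinement of $j$ bipartitions, count coarsenings, and get the Stirling count from the fact that in $K_i$ every partition is a cut and distinct partitions give distinct monomials) is sound, and your treatment of the generator count is fine --- indeed more explicit than the paper, which states it without argument. But the two concrete claims on which you hang the crucial index range are wrong. Claim (b) is false: $\lfloor\log_2 j\rfloor+1=k+1$ holds precisely for $2^{k}\le j\le 2^{k+1}-1$, not for $2^{k-1}\le j\le 2^{k}-1$ (on that range $\lfloor\log_2 j\rfloor+1=k$). Claim (a) is correspondingly too weak: at $j=2^{k-1}$ it only promises a partition with $k$ blocks, which yields $I_{i,2^{k-1}}\subseteq P_{i,k}$, not $\subseteq P_{i,k+1}$, so the theorem does not follow from (a)+(b) as stated. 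The correct bound, which your own sign-vector idea delivers, is $\lceil\log_2(j+1)\rceil+1$: if the common refinement $\sigma$ of the $j$ distinct bipartitions has $m$ blocks, each of those bipartitions arises by merging the blocks of $\sigma$ into two nonempty groups, and there are exactly $2^{m-1}-1$ such bipartitions, so $j\le 2^{m-1}-1$, i.e. $m\ge\lceil\log_2(j+1)\rceil+1$; for $j\ge 2^{k-1}$ this gives $m\ge k+1$, and merging $\sigma$ down to $k+1$ blocks gives a $(k+1)$-cut monomial dividing the lcm. Note also that the ``upper bound direction'' you flag as delicate is a red herring: an upper bound on the number of blocks of the refinement is never needed (more blocks only helps divisibility by a $(k+1)$-cut), and in any case the number of distinct sign-vectors is bounded by $2^{j}$, not by $2^{\lceil\log_2(j+1)\rceil}$.

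The reverse inclusion is also under-specified at the top of the range, and the divisibility direction is stated backwards: to show $m_\delta\in I_{i,j}$ for a $(k+1)$-partition $\delta$ you must exhibit $j$ distinct $2$-cuts whose lcm \emph{divides} $m_\delta$ (not realise $m_\delta$ as a divisor of an lcm). Your proposed ``merging one block of $\delta$ with each of the others in turn'' produces only about $k$ bipartitions, enough for $I_{i,k}$ but not for $I_{i,2^{k}-1}$. The needed fact is that $\delta$ has exactly $2^{k}-1$ distinct two-block coarsenings $\tau_A=\bigl(\bigcup_{a\in A}\delta_a\bigr)\vert\bigl(\bigcup_{a\notin A}\delta_a\bigr)$, $\emptyset\neq A\subseteq[k]$, and their lcm is exactly $m_\delta$; this is how the paper proves $P_{i,k+1}\subseteq I_{i,2^{k}-1}$, after which the chain $I_{i,2^{k}-1}\subseteq\cdots\subseteq I_{i,2^{k-1}}$ closes the argument. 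With these two corrections your route goes through and is a genuine alternative to the paper's: the paper proves $I_{i,2^{k-1}}\subseteq P_{i,k+1}$ by induction on $k$, splitting the collection of $2^{k-1}$ bipartitions into halves and invoking Lemma~\ref{lem:two}, whereas your corrected common-refinement count proves it directly and non-inductively.
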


\begin{proof}
We divide the proof into two parts: (1) $P_{i,k+1}\subseteq I_{2^k-1}$ and (2) $I_{i,2^{k-1}}\subseteq P_{i,k+1}$. The assertion then follows, as we have the chain of inclusions $I_{i,2^{k}-1}\subseteq\cdots\subseteq I_{i,2^{k-1}}$. 

\medskip

To prove (1) assume that $\delta=\delta_1\vert \delta_2\vert\dots\vert \delta_{k+1}$ is a $(k+1)$-partition of $K_i$. Then for each subset $A\subset [k]$ we define the $2$-partition
\[
\tau_A=\tau_{A_1}\vert\tau_{A_2}\quad\text{where}\quad \tau_{A_1}=\cup_{a\in A} \delta_a\ \ \text{and}\ \ \tau_{A_2}=\cup_{a\not\in A} \delta_a.
\]
Now it is easy to see that $m_\delta=\lcm(\{m_A\})$, when $A$ runs over all nonempty subsets of $[k]$, which implies (1).

\medskip

To prove (2) we need to show that for any collection  $\mathcal{C}=\{C_1,\ldots,\\ C_{2^{k-1}}\}$ of $2$-partitions there exists a $(k+1)$-partition whose associated monomial divides the $\lcm$-monomial of $\mathcal{C}$. 
The proof is by induction on $k$.  For $k=1$ the proof is clear. Let $k>1$. Then by induction hypothesis there are two $k$-partitions $\delta$ and $\tau$ such that 
\[
m_{\delta}|\lcm(m_{C_1},\ldots,m_{C_{2^{k-2}}})\ \ \text{and}\ \ m_{\tau}|\lcm(m_{C_{2^{k-2}+1}},\ldots,m_{C_{2^{k-1}}})
\]
and so $m_{\delta}m_\tau|\lcm(m_{C_1},\ldots,m_{C_{2^{k-1}}})$. Now by Lemma~\ref{lem:two} we have a $(k+1)$-partition $\sigma$ such that $m_\sigma|m_\delta m_\tau$, as desired.
\end{proof}
The following tables records the number of generators of the ideals $P_{i,j}$, the ideal of the $j$-cuts of $K_i$. for $i$ and $j$ up to $10$. 
\[
\begin{array}{c|ccccccccc}
i \backslash j & 2 & 3 & 4 & 5 & 6 & 7 & 8 & 9 & 10 \\ \hline
2 & 1 \\
3 &  3 & 1 \\
4 &  7 & 6 & 1 \\
5 &  15 & 25 & 10 & 1 \\
6 &  31 & 90 & 65 & 15 & 1 \\
7 &  63 & 301 & 350 & 140 & 21 & 1 \\
8 &  127 & 966 & 1701 & 1050 & 266 & 28 & 1 \\
9 &  255 & 3025 & 7770 & 6951 & 2646 & 462 & 36 & 1 \\
10 & 511 & 9330 & 34105 & 42525 & 22827 & 5880 & 750 & 45 & 1 \\
\end{array}
\]
As an immediate consequence of the above theorem we have:
\begin{corollary}\label{cor:identical}
Let $I$ be the cut ideal of the complete graph $K_i$. Then, the set of ideals $\{I_k\}$ and 
$\{\underline{I}_{k}\}$, appearing in the lcm-filtration and stepwise filtration of $I$ are the same.
\end{corollary}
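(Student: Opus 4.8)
The plan is to reduce the corollary to a precise description of the ideals occurring in each filtration and then observe that the two resulting lists coincide. The theorem just proved already tells us that the distinct members of the usual lcm-filtration $I = I_1 \supseteq I_2 \supseteq \cdots \supseteq I_{2^{i-1}-1}$ of $I = \mathcal{C}_{K_i}$ are exactly $P_{i,2}, P_{i,3}, \dots, P_{i,i}$, since $I_{i,2^{k-1}} = \cdots = I_{i,2^{k}-1} = P_{i,k+1}$ and $I = \mathcal{C}_{K_i}$ has $2^{i-1}-1$ minimal generators. So the only thing left to prove is that the stepwise filtration satisfies $\underline{I}_k = P_{i,k+1}$ for $1 \le k \le i-1$ (after which it stabilizes at the principal ideal $P_{i,i}$).

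I would prove $\underline{I}_k = P_{i,k+1}$ by induction on $k$, the base case $\underline{I}_1 = I = P_{i,2}$ being the definition. For the inductive step, assume $\underline{I}_k = P_{i,k+1}$, so that $\underline{I}_{k+1}$ is generated by the monomials $\lcm(m_\delta, m_\tau)$ as $\delta, \tau$ range over distinct $(k+1)$-partitions of $K_i$. For the inclusion $\underline{I}_{k+1} \subseteq P_{i,k+2}$ I would invoke Lemma~\ref{lem:two} applied to $\delta$ and $\tau$: it yields a $(k+2)$-partition $\sigma$ with $m_\sigma \mid m_\delta m_\tau$; since all cut monomials of $K_i$ are squarefree, divisibility by the product upgrades to $m_\sigma \mid \lcm(m_\delta, m_\tau)$, so every generator of $\underline{I}_{k+1}$ lies in $P_{i,k+2}$.

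For the reverse inclusion $P_{i,k+2} \subseteq \underline{I}_{k+1}$, given a $(k+2)$-partition $\rho = \rho_1 \mid \cdots \mid \rho_{k+2}$ I would exhibit $m_\rho$ as an explicit lcm of two coarsenings: take $\sigma = \rho_1 \mid \cdots \mid \rho_k \mid (\rho_{k+1} \cup \rho_{k+2})$ and $\tau = (\rho_1 \cup \rho_2) \mid \rho_3 \mid \cdots \mid \rho_{k+2}$. Both are genuine $(k+1)$-partitions of $K_i$, since every vertex subset induces a connected subgraph of a complete graph, and they are clearly distinct. Because $E(\sigma)$ is $E(\rho)$ with the $\rho_{k+1}$–$\rho_{k+2}$ edges removed and $E(\tau)$ is $E(\rho)$ with the $\rho_1$–$\rho_2$ edges removed, and these two edge sets are disjoint, $E(\sigma) \cup E(\tau) = E(\rho)$, hence $\lcm(m_\sigma, m_\tau) = m_\rho$. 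As $m_\sigma$ and $m_\tau$ are distinct generators of $P_{i,k+1} = \underline{I}_k$, this gives $m_\rho \in \underline{I}_{k+1}$, closing the induction. Comparing the two lists then yields $\{I_k\} = \{P_{i,2}, \dots, P_{i,i}\} = \{\underline{I}_k\}$, which is the assertion.

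Given the theorem and Lemma~\ref{lem:two}, the argument is essentially bookkeeping, so there is no serious obstacle; the only genuinely content-bearing points are the squarefreeness observation that passes from "divides $m_\delta m_\tau$" to "divides $\lcm(m_\delta, m_\tau)$", and the check that the two coarsenings $\sigma, \tau$ used in the reverse inclusion are always distinct valid partitions — both of which are immediate for the complete graph.
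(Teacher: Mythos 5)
Your proposal is correct, and it is in fact more than the paper provides: the paper states the corollary as an ``immediate consequence'' of the preceding theorem and gives no argument, whereas you supply the missing content, namely the identity $\underline{I}_k = P_{i,k+1}$ for the stepwise filtration. The theorem only identifies the distinct ideals of the \emph{usual} filtration with $P_{i,2},\dots,P_{i,i}$; to handle the stepwise side one genuinely needs your two inclusions. Your forward inclusion is the same mechanism as in the paper's proof of the theorem (Lemma~\ref{lem:two} plus the squarefreeness upgrade from ``divides $m_\delta m_\tau$'' to ``divides $\lcm(m_\delta,m_\tau)$'', a point the paper itself glosses over), but your reverse inclusion --- writing each $(k+2)$-cut monomial $m_\rho$ as $\lcm(m_\sigma,m_\tau)$ for the two explicit coarsenings $\sigma=\rho_1\vert\cdots\vert\rho_k\vert(\rho_{k+1}\cup\rho_{k+2})$ and $\tau=(\rho_1\cup\rho_2)\vert\rho_3\vert\cdots\vert\rho_{k+2}$ --- is a new step not contained in the theorem's proof, which only expresses $(k+1)$-cut monomials as lcms of $2$-cut monomials; it is exactly what is required because the stepwise filtration only takes pairwise lcms of the \emph{current} generating set. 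Your checks that $\sigma\neq\tau$, that both are valid cuts of $K_i$, and that the two deleted edge sets are disjoint are all sound. The one point you use tacitly is that the monomials $m_\delta$ of distinct $(k+1)$-cuts are incomparable under divisibility, so that they really are the minimal generators of $\underline{I}_k=P_{i,k+1}$; the paper builds this into its definition of $P_{i,j}$, and it is easy to verify for $K_i$ (containment of cut edge sets forces a refinement with equally many blocks, hence equality), so this is a cosmetic rather than substantive gap.
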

\begin{remark}
    Corollary~\ref{cor:identical} does not hold in general; see Example~\ref{rem:finalStep}.
\end{remark}


Let us now consider the graph $G = T_i$, a tree on the vertex set $\{1,\ldots,i\}$. The number of edges in $T_i$ is $i-1$, and each edge is a $2$-cut for $T_i$. The ideal $P_{i,2}$ is generated by all the variables, so it has a Taylor lattice, indicating that there is no redundancy in the lattice structure of these ideals, unlike in the case of complete graphs. Furthermore, since $T_i$ has no cycles, any set of $j$ edges of $T_i$ forms a $j$-cut. Hence, we have the following corollary: 

\begin{corollary}\label{cor:identical2}
Let $I$ be the cut ideal of a tree $T_i$. For each $j \leq i$, the ideal $P_{i,j}$ has 
$\binom{i-1}{j}$ generators, each corresponding to a distinct set of $j$ variables. 
In this case, the elements of the lcm-lattice and the ideals in both the usual and stepwise filtrations coincide.
\end{corollary}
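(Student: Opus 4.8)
The statement to prove is Corollary~\ref{cor:identical2}: for the cut ideal $I=\mathcal{C}_{T_i}$ of a tree $T_i$, the ideal $P_{i,j}$ of $j$-cuts has $\binom{i-1}{j}$ generators (one per $j$-subset of edges), and the elements of the lcm-lattice together with the ideals in both the usual and the stepwise lcm-filtrations coincide. Since the preceding paragraph already records that $T_i$ has $i-1$ edges, that every edge of a tree is a $2$-cut, and that (since a tree has no cycles) any set of $j$ edges induces exactly $j$ connected blocks and hence is a $j$-cut, the combinatorial identification $m_C = \prod_{e \in S}x_e$ with $S$ an arbitrary $j$-subset of $E(T_i)$ is immediate, giving the generator count $|G(P_{i,j})| = \binom{i-1}{j}$. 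The plan is to leverage this: the generators of $\mathcal{C}_{T_i}$ are precisely the $i-1$ variables $x_e$, so $\mathcal{C}_{T_i}$ is (the extension of) the maximal graded ideal, which is the textbook example of an ideal with minimal Taylor resolution.

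First I would spell out that the minimal generating set of $I = P_{i,2}$ is $\{x_e : e \in E(T_i)\}$, a set of $r = i-1$ distinct variables. For such an ideal, $\lcm(\{x_e : e \in \sigma\}) = \prod_{e \in \sigma} x_e = \xb^{\sigma}$, and these $2^{r}$ squarefree monomials are pairwise distinct; hence the lcm-lattice $L_I$ is the full Boolean lattice on $r$ atoms — the Taylor lattice for $r$ generators in the terminology of Section~\ref{sub:versus}. This is exactly the "no redundancy" situation, and it is the crux of the argument: there is nothing to collapse.

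Next I would trace through the two filtrations. For the usual lcm-filtration, $I_k = \langle \xb^{\sigma} : \sigma \subseteq [r],\ |\sigma| = k\rangle$; since the $\xb^{\sigma}$ with $|\sigma|=k$ are pairwise incomparable under divisibility (each is squarefree of the same degree and no two are equal), this is already a minimal generating set, so $G(I_k) = \{\xb^\sigma : |\sigma| = k\}$ and by the tree discussion $I_k = P_{i,k+1}$ when $k+1 \le i$ (and $I_k = 0$-style trivial otherwise). For the stepwise filtration, one checks by induction on $k$ that $\underline{I}_k = I_k$: assuming $G(\underline{I}_{k-1}) = \{\xb^{\sigma} : |\sigma| = k-1\}$, taking $\lcm(\xb^{\sigma}, \xb^{\tau})$ over distinct $\sigma,\tau$ of size $k-1$ produces exactly all $\xb^{\rho}$ with $|\rho| = k$ (every $k$-set arises, e.g. as a union $\sigma \cup \tau$ with $|\sigma \cap \tau| = k-2$), and these again form a minimal generating set; alternatively one invokes Theorem~\ref{thm:simplicial} on the simplicial side, where $\underline{\Delta}_k$ is obtained from the $(k-2)$-skeleton description. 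Either route shows $\underline{I}_k = I_k$ for all $k$, so all three objects — the lcm-lattice elements, the $I_k$, and the $\underline{I}_k$ — coincide.

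I do not anticipate a genuine obstacle here: the corollary is a direct specialization of the earlier structural facts, and the only thing requiring a word of care is verifying that at each step the naive generating set (all squarefree monomials of a fixed degree in $i-1$ fixed variables) is already \emph{minimal}, i.e. that no redundancy sneaks in — which holds because equal-degree squarefree monomials are never divisors of one another. This is why the excerpt says the result "follows from the definition" and the proof can be kept to a few lines.
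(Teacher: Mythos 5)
Your proposal is correct and follows essentially the same route as the paper: the cut ideal of a tree is generated by the $i-1$ edge variables, so its lcm-lattice is the full Boolean (Taylor) lattice, and both filtrations consist of the ideals generated by the squarefree monomials of each fixed degree, which you verify carefully (including minimality of generators and the inductive step for the stepwise filtration) where the paper simply says the result follows from the preceding observations. The only blemish, inherited from the paper's own loose phrasing, is the indexing between ``a set of $j$ edges'' and ``a $j$-cut'' (removing $j$ edges from a tree yields $j+1$ connected parts), but this does not affect the substance of your argument.
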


We conclude by noting that cut ideals naturally arise in the context of divisor theory (see, for example, \cite{postnikov2004trees, mohammadi2016divisors, mohammadi2014divisors, mohammadidivisors}), and that their Betti numbers and resolutions have been extensively studied.



\subsection{Density regime in cut ideals of random graphs}\label{sec:regime}
Measuring the redundancy of the lcm-filtration with respect to the stepwise version is valuable for certain aspects of analyzing the lcm structure of an ideal or simplicial complex. One such aspect, for example, is computational time. We can distinguish between the two scenarios discussed in the previous section: complete graphs (which exhibit high redundancy) and trees (which exhibit low redundancy).

\smallskip
When the number of generators of an ideal $I_k$ in the lcm-filtration of an ideal $I$ is close to $\textstyle{\binom{r}{k}}$ for every $k$, where $r$ is the number of minimal generators of $I$, the usual lcm-filtration has low redundancy. Hence, its computation is less demanding than that of the stepwise filtration. At step $k$ of the usual lcm-filtration, the procedure computes $\textstyle{\binom{r}{k}}$ least common multiples and then eliminates those generators that are divisible by others. This reduction is small because the actual number of minimal generators is close to $\textstyle{\binom{r}{k}}$. In contrast, the stepwise filtration at step $k$ computes the two-fold lcm of the generators of $I_{k-1}$, which has approximately $\textstyle{\binom{r}{k-1}}$ generators. This requires computing $\textstyle{\binom{\binom{r}{k-1}}{2}}$ least common multiples, a significantly more demanding computation. This case is exemplified by ideals whose Taylor resolution is minimal or close to minimal \cite{A17}. In the squarefree case, this corresponds to little overlap among the supports of the generators.

\smallskip

The opposite situation occurs in the case of high redundancy, i.e., when $|G(I_k)| \ll \binom{r}{k}$, where $G(I_k)$ denotes the minimal generating set of $I_k$. In this case, the usual lcm-filtration will still compute $\binom{r}{k}$ least common multiples, but the number of minimal generators will be significantly reduced. On the other hand, the stepwise lcm-filtration computes the two-fold least common multiples of smaller sets of generators. This typically results in smaller resolutions, which, in the squarefree case, corresponds to dense overlapping among the supports of the generators.

It is thus important, for computational efficiency, to study the transition between these two regimes and identify how to distinguish them before computation. To do so, we will examine cut ideals of random graphs. Consider the complete graph $K_n$ on $n$ vertices. We delete edges from $K_n$ randomly, one at a time. At each step $i$ (i.e., after deleting $i$ edges), we obtain an instance of the Erdős-Rényi random graph model $ER(n,p)$ on $n$ vertices \cite{ER59}, where the parameter $p$ corresponds to $\frac{\binom{n}{2}}{n-i}$, the ratio between the total number of edges in the complete graph and the number of edges remaining in the graph. For each deletion step $i$, we focus on the ratio between the size of the lcm-lattice of the cut ideal $I_{n,i}$ of the resulting graph and the size of the Taylor lcm-lattice for an ideal with the same number of generators $r$ as $I_{n,i}$, i.e., a poset with $2^r$ distinct elements. We consider the density of the graph as the characteristic that relates to this ratio.

\begin{definition}
Let $I$ be a monomial ideal with $r$ generators and let $p(I)$ be the cardinality of its lcm-lattice, $\mathcal{P}(I)$. We define the {\em poset density} of $I$, denoted ${\rm pden}(I)$ as ${\rm pden}(I)=\frac{\vert\mathcal{P}(I)\vert}{2^r}\in\mathbb{R}$.
\end{definition}

\begin{definition}
For a given monotone property $P$ of a family of random graphs, we define $t(n):\mathbb{N}\rightarrow \mathbb{R}$ is a {\em threshold function} if the following conditions hold:
\begin{enumerate}
 \item ${\rm prob}(\mbox{property } P) \simeq 0 \mbox{ if } \frac{p}{t(n)}\rightarrow 0$
\item ${\rm prob}(\mbox{property } P) \simeq 1 \mbox{ if } \frac{p}{t(n)}\rightarrow \infty$,   
\end{enumerate}
where $n$ indicates the number of vertices of the graph and $p$ is the probability function that parametrizes the family of graphs.
If a threshold function $t(n)$ exists for $P$, we say that $P$ undergoes a phase transition at $t(n)$.
\end{definition}

\begin{theorem}\label{th:phaseTransition}
Let $ER(n,p)$ denote the Erd\H{o}s-R\'enyi random graph model, where $n$ is the number of vertices and $p$ is the probability that each possible edge is present in the graph. A threshold function for the poset density of cut ideals of the graphs in $ER(n,p)$ is given by $t(n) = \frac{\lambda}{n}$, where $\lambda$ is a positive constant. That is, the poset density undergoes a phase transition in $ER(n,p)$ at $t(n) = \frac{\lambda}{n}$.
\end{theorem}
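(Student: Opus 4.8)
The plan is to connect the poset density of the cut ideal $\mathcal{C}_G$ to a more classical monotone property of $ER(n,p)$ whose threshold is already understood, and then transfer the phase transition through a monotonicity and sandwiching argument. The natural candidate is connectivity of $G$: if $G$ is disconnected, then the $2$-cut structure degenerates (many partitions either fail to induce connected parts or produce repeated monomials), which forces heavy redundancy in the $\lcm$-lattice and hence drives $\mathrm{pden}(\mathcal{C}_G)$ toward $0$; if $G$ is connected (and, as $p$ grows past $\lambda/n$, increasingly ``rigid'' in the sense of having many edge-disjoint spanning substructures) the $2$-cuts become numerous and their $\lcm$'s more nearly independent, pushing $\mathrm{pden}$ up. Since the emergence of a giant component and of connectivity in $ER(n,p)$ both occur in the regime $p \asymp 1/n$ (the giant component threshold is exactly $t(n)=1/n$, i.e.\ $\lambda=1$), this is the scale at which the qualitative behaviour of the cut-ideal lattice must change.

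The key steps, in order, would be: \textbf{(1)} Show $\mathrm{pden}(\mathcal{C}_G)$ is a monotone graph property in the appropriate direction — more precisely, analyze how adding an edge changes the number of minimal generators $r$ and the size $|\mathcal{P}(\mathcal{C}_G)|$, and argue that the ratio $|\mathcal{P}|/2^r$ is, up to the estimates we need, monotone (or at least that the event $\{\mathrm{pden} \geq \varepsilon\}$ is increasing for fixed small $\varepsilon$). This uses the combinatorial description of cuts: a $2$-partition $A_1\mid A_2$ contributes a generator only when both $A_1$ and $A_2$ induce connected subgraphs, so sparse graphs have very few $2$-cuts, and the ones they have have highly overlapping edge-supports, collapsing the lattice. \textbf{(2)} For $p = o(1/n)$: here $G$ is a.a.s.\ a forest-like graph with all components of size $O(\log n)$, so by the tree analysis (Corollary~\ref{cor:identical2} and the surrounding discussion) each component contributes a Taylor-type lattice but the \emph{global} ideal, having many generators coming from partitions that separate the tiny components in all possible ways, has $|\mathcal{P}|$ exponentially smaller than $2^r$; one shows $r$ grows fast enough relative to $\log|\mathcal{P}|$ that $\mathrm{pden} \to 0$. \textbf{(3)} For $p = \omega(1/n)$: $G$ a.a.s.\ has a giant component (and for $p \gg \log n / n$, is connected); one exhibits a large family of $2$-cuts whose monomials $m_C$ have pairwise ``generic'' $\lcm$-behaviour — concretely, using edge-disjoint paths/spanning trees guaranteed by the density, one produces $\Omega(r)$ generators such that all $\lcm$'s over subsets are distinct, forcing $|\mathcal{P}| \geq 2^{\Omega(r)}$ and hence $\mathrm{pden}$ bounded away from $0$ (tuning $\lambda$ controls the constant). \textbf{(4)} Conclude by invoking the threshold-function definition: combine (2) and (3) with the monotonicity from (1) to get $\mathrm{prob}(\mathrm{pden} \geq \varepsilon) \to 0$ when $p/t(n)\to 0$ and $\to 1$ when $p/t(n)\to\infty$, for $t(n)=\lambda/n$.

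The main obstacle I anticipate is \textbf{step (3)}: controlling $|\mathcal{P}(\mathcal{C}_G)|$ from \emph{below}. The $\lcm$-lattice is generated by all $\lcm$'s of \emph{subsets} of generators, so a lower bound on its size requires finding many subsets with distinct $\lcm$-monomials; this is delicate because cut monomials from a dense graph can still share edges heavily (two different $2$-partitions can cut nearly the same edge set). The resolution is likely to require a careful choice of a ``sunflower-free'' or ``shattering'' family of $2$-cuts — for instance indexing cuts by single vertices $v$ via the partition $\{v\}\mid V\setminus\{v\}$ (whose monomial is $\prod_{e \ni v} x_e$, the ``star'' at $v$), and observing that for $v$'s in the $2$-core these star monomials have supports that are ``sufficiently independent'' (each edge lies in only two stars) so that $\lcm$'s over subsets $S$ of vertices recover $S$ up to small ambiguity, giving $|\mathcal{P}| \geq 2^{\Omega(n)}$ while $r = O(2^n)$ only if we are careless — so one must also pin down $r$ precisely, i.e.\ count the $2$-cuts, which ties back to step (1). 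A secondary difficulty is that $\mathrm{pden}$ may not be \emph{exactly} monotone under edge addition (adding an edge can destroy some $2$-cuts by disconnecting a would-be part's complement), so step (1) may need to be replaced by a two-sided sandwich: bound $ER(n,p)$ between $ER(n,p_-)$ and $ER(n,p_+)$ and use concentration of $r$ and of $\log|\mathcal{P}|$ rather than strict monotonicity.
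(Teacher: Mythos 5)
Your proposal has the direction of the phase transition reversed, and this is not a matter of presentation but a substantive error that makes steps (2) and (3) unprovable. In this paper the redundancy in the lcm-lattice is caused by \emph{density}, not by sparsity: for a tree (or forest-like graph) every $2$-cut monomial is a single variable, so the cut ideal has a full Taylor lattice and ${\rm pden}=1$ (this is exactly Corollary~\ref{cor:identical2}, which you cite but read backwards), whereas for the complete graph $K_n$ the number of $2$-cuts is $r=2^{n-1}-1$ while every lcm of cut monomials is dominated by the monomial of some partition of the vertex set into connected parts, so $\vert\mathcal{P}(I)\vert$ is bounded by a Bell-type number and ${\rm pden}=\vert\mathcal{P}(I)\vert/2^{r}\to 0$ superexponentially. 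Consequently your step (3) — producing, above the threshold, a family of cuts with $2^{\Omega(r)}$ distinct lcm's so that ${\rm pden}$ stays bounded away from $0$ — is false: above the giant-component threshold the graph contains many cycles, $r$ blows up, the lcm's collapse massively, and ${\rm pden}\to 0$. Likewise your step (2) claim that ${\rm pden}\to 0$ for $p=o(1/n)$ contradicts the tree/forest analysis: in that regime the components are small and tree-like, the cut monomials have essentially disjoint supports, and the poset is asymptotically Taylor, i.e.\ ${\rm pden}\to 1$. Your opening heuristic (``disconnected $\Rightarrow$ heavy redundancy, connected $\Rightarrow$ nearly independent lcm's'') is precisely the opposite of the mechanism at work.

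The probabilistic ingredient you identify — the emergence of the giant component at $t(n)=\lambda/n$ in $ER(n,p)$ — is indeed the one the paper uses, and the paper's argument is much shorter than your plan: above the threshold the giant component contains many cycles, which forces high redundancy and ${\rm pden}\to 0$; below it all components are small and tree-like, giving an asymptotically Taylor poset and ${\rm pden}\to 1$; the threshold for the giant component is then quoted from Erd\H{o}s--R\'enyi theory. So no lower bound of the form $\vert\mathcal{P}\vert\geq 2^{\Omega(r)}$ in the dense regime is needed (or true), and the delicate ``shattering family of star cuts'' construction you anticipate as the main obstacle is solving the wrong problem. If you fix the direction, the monotonicity concern in your step (1) should also be restated for the correct monotone event (e.g.\ $\{{\rm pden}\leq\varepsilon\}$ as an increasing property under edge addition), and note that the connectivity threshold $\log n/n$ you mention is irrelevant here — only the giant-component threshold $\lambda/n$ enters.
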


\begin{proof}
Let $G$ be a random graph in $ER(n,p)$ for $n \gg 0$ and $I$ be its associated cut ideal. If $G$ has a connected component that contains a constant fraction of the nodes as $n$ grows, we say that $G$ has a giant connected component. In this case, the poset density of $I$ is asymptotically zero, because the giant component is likely to contain many cycles with probability one, leading to high redundancy in the lcm-lattice of $I$. On the other hand, if there is no giant connected component, the graph consists of many small connected components, whose average size approaches one as $n$ grows. In this case, the cut ideal $I$  asymptotically has a Taylor poset.

The presence of a giant component in $ER(n,p)$ corresponds to the threshold function $t(n) = \frac{\lambda}{n}$, as shown in \cite{ER59,N10}, which completes the proof.
\end{proof}

Figure~\ref{fig:ratios-density} illustrates Theorem~\ref{th:phaseTransition} by displaying the results of 10 different random runs of edge deletion for $K_n$, where $n \in \{5,6,7,8\}$. Each line in the graph represents one of these 10 random deletion runs for each value of $n$. We plot the relationship between two ratios: On the vertical axis, we show the ratio of the size of the lcm-poset of the ideals (with size defined as the total number of points in the poset) to the size of the full poset, which is $2^r$ for an ideal with $r$ generators. On the horizontal axis, we plot the density of the graph, defined as the ratio of edges present in the graph to the total number of possible edges in a graph on $n$ vertices, i.e., $\binom{n}{2}$. Each of these graphs corresponds to an instance of the random Erd\H{o}s-R\'enyi model $ER(n,d)$ on $n$ vertices, with probability $d$, where $d$ is the graph's density \cite{ER59}. A future line of research is the study of the density behaviour in families of random monomial ideals \cite{DPSSW19}.

\begin{center}
\begin{figure}[ht]
\subfloat{%
\begin{tabular}{c}
\includegraphics[scale=0.27]{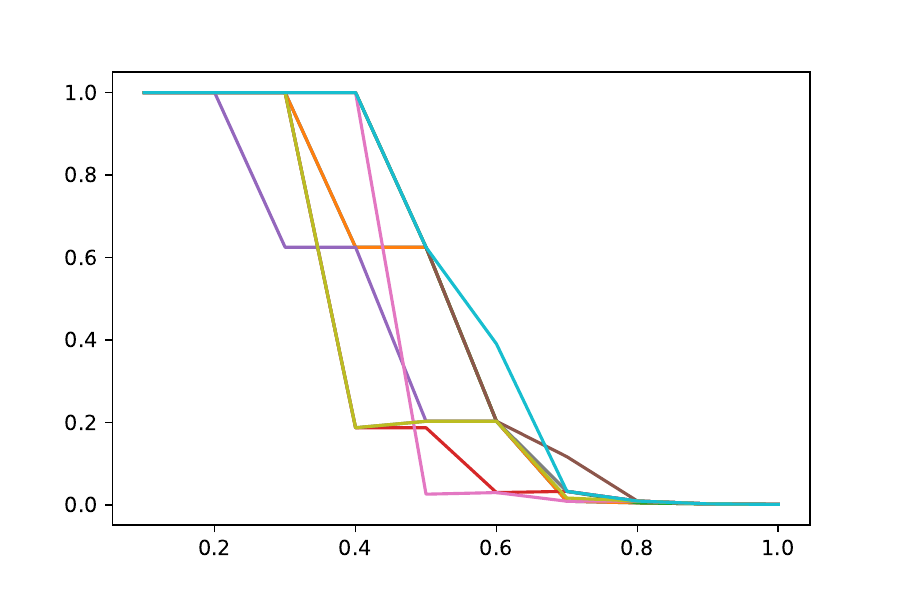} \\
\includegraphics[scale=0.27]{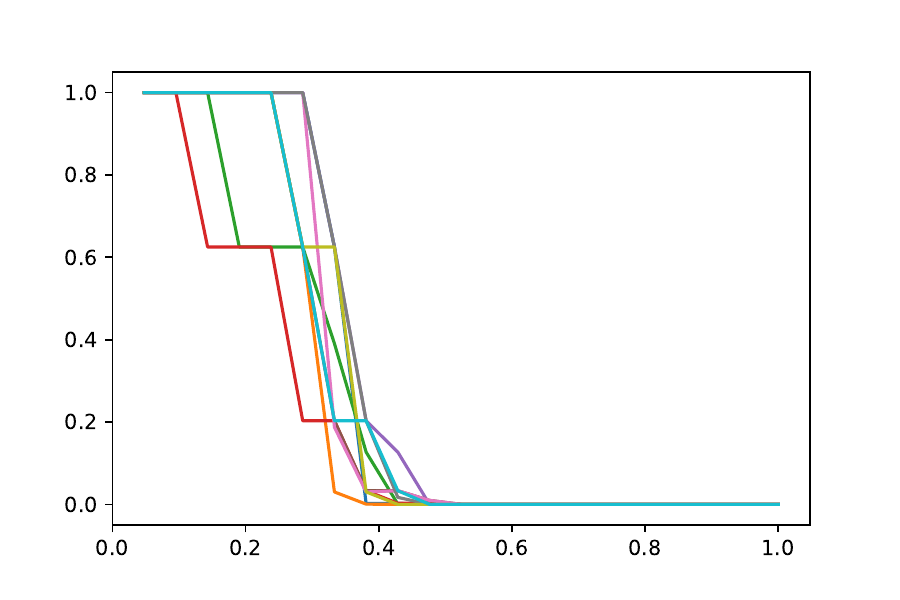}\\
\end{tabular}
}%
\subfloat{%
\begin{tabular}{c}
\includegraphics[scale=0.27]{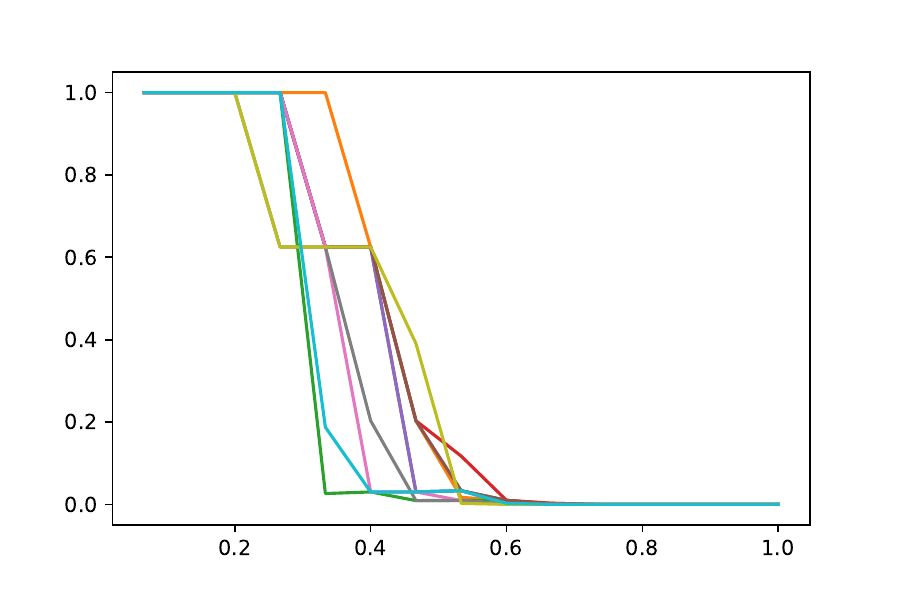} \\
\includegraphics[scale=0.27]{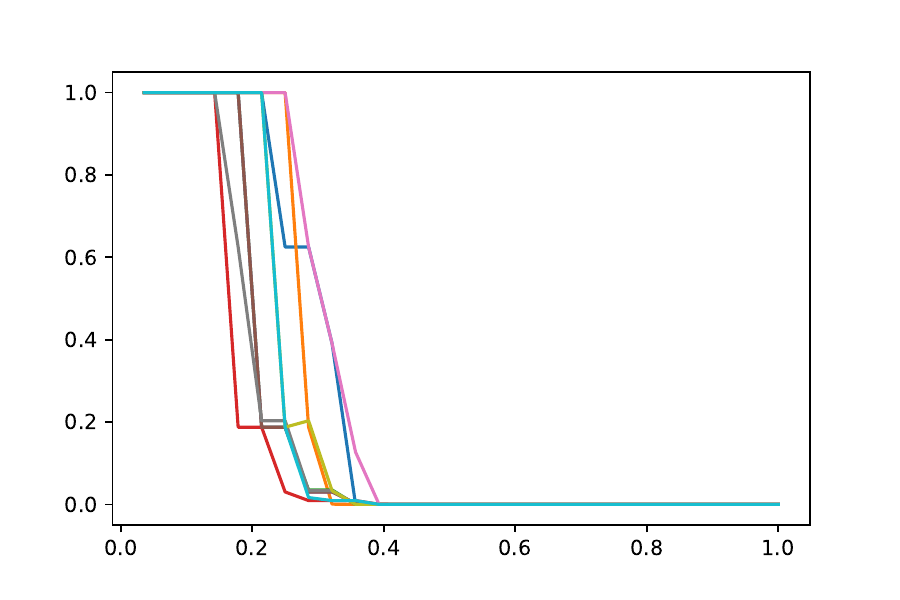}\\
\end{tabular}
}%
\caption{Graph density vs. size of lcm-lattice for cut ideals of subgraphs of the complete graph on $n$ vertices, $n = 5, \dots, 8$.}\label{fig:ratios-density}
\end{figure}
\end{center}


\section{
Simultaneous failures in coherent systems
}\label{sec:simultaneous}
In the fields of systems reliability and signature analysis~\cite{samaniego2007system,boland2001signatures,goharshady2020efficient}, the lcm structure of the associated ideals is fundamental for analyzing simultaneous events. As highlighted in \cite{MSW16,MSW17,MSW18}, a pivotal step in studying simultaneous failures in a system involves computing the various $k$-fold lcm-ideals of its associated failure ideal $I$. 

\smallskip 
Let $n$ be a positive integer, and let $S$ be a coherent system with $n$ components, each taking one of finitely many possible states. The states of the system can be represented as elements of $\mathbb{N}^n$, with a designated subset of failure states $F$, assumed to be coherent (i.e., closed under entrywise ordering). These failure states correspond to the exponents of the monomials in the failure ideal $I$ within $\kb[\xb] := \kb[x_1, \dots, x_n]$.

\smallskip 
Suppose minimal failures can occur simultaneously in the system, and let $Y$ represent the number of such simultaneous failures. The event $\{Y \geq 1\}$ corresponds to system failure, while the intersection of two failure events, $\xb^\alpha$ and $\xb^\beta$, results in $\operatorname{lcm}(\xb^\alpha, \xb^\beta) = \xb^{\alpha \lor \beta}$, representing the event $\{Y \geq 2\}$. This reasoning naturally extends to $\{Y \geq k\}$, enabling the study of tail probabilities $\operatorname{Prob}(Y \geq k)$. Hence, simultaneous failures are encoded within the $k$-fold lcm-ideals $I_k$. 
Finally, consider a probability distribution defined over the system's states. The system's reliability is the probability that it remains in a working state, while its unreliability (or failure probability) is given by $\mathbb{E}[1_F]$, the expected value of the indicator function of the failure set $F$.

In a system with $n$ components that fail independently according to a common failure time distribution and density, the order statistics of the failure times are denoted as $T_{(1)}, T_{(2)}, \dots, T_{(n)}$. The system fails at time $T = T_{(i)}$ for some $i \in \{1, \dots, n\}$, with probabilities:  
\[
s_i = \operatorname{Prob}(T = T_{(i)}) = \operatorname{Prob}(T \geq T_{(i)}) - \operatorname{Prob}(T \geq T_{(i+1)}).
\]
These probabilities, $s_1, \dots, s_n$, can be computed using the failure ideal $I$ and its squarefree representation. Here, $s_i$ represents the conditional probability that exactly $i$ components have failed, given that the system has failed. If $P_i$ denotes the probability of exactly $i$ components failing, then
$s_i = \frac{P_i}{P(F)}$,
where $P(F)$ is the total failure probability. Alternatively, $s_i$ can be expressed as:
\[
s_i = \frac{\mathbb{E}[1_{E_i}(\alpha)]}{\mathbb{E}[1_F(\alpha)]},
\]
where $E_i$ is the set of failure states corresponding to exactly $i$ failed components.

For the case of multiple simultaneous failures, we define $T_k^{(i)}$ as the time of the $k$-th minimal failure, given that $i$ components fail. The $k$-fold signature $s_k^i$ represents the associated probabilities:  
\[
s_k^i = \operatorname{Prob}(T_k = T_k^{(i)}) = \operatorname{Prob}(T_k \geq T_k^{(i)}) - \operatorname{Prob}(T_k \geq T_k^{(i+1)}).
\]
Let $I$ be the failure ideal of the system, and let $I_k$ denote the $k$-fold lcm-ideal of $I$. The $k$-fold lcm-ideal $I_k$ encodes the system states where at least $k$ failures occur simultaneously. We define $I^{[i]}_k$ to be the ideal generated by the squarefree monomials of degree $i$ in $I_k$. 
The $k$-fold signature $s_k^i$ is determined as the difference of the Hilbert series evaluations of the ideals $I_k^{[i]}$ and $I_k^{[i+1]}$:
\[
s_k^i = H(I_k^{[i]}) - H(I_k^{[i+1]}),
\]
where $H(\cdot)$ denotes the Hilbert series. This formulation captures the relationship between the algebraic invariants of the lcm-ideals and the probabilities of simultaneous failures in coherent systems.

In general, the study of simultaneous failure and signature involves the study of $lcm$-filtrations of the corresponding ideals. This might be in general computationally demanding and therefore, under some circumstances, the study of the stepwise filtration is a viable alternative, recall Section \ref{sub:versus}. In the rest of this section we make this observation explicit for one of the most important families of coherent systems, namely $k$-out-of-$n$ systems and variants.

One of the key models in applied reliability engineering is the $k$-out-of-$n$:F system, where F denotes failure \cite{KZ02}. In such a system, failure occurs whenever at least $k$ out of $n$ components fail. A variation, the consecutive linear $k$-out-of-$n$ model, fails whenever $k$ consecutive components fail. The analysis of simultaneous failures in these systems is crucial for many applications \cite{AKO79,MSW18,Y19}. 

The number of minimal failure events in these systems grows exponentially with respect to the number of minimal failures of the system itself. Generally, when $k \ll n$, these systems exhibit low redundancy. However, for the consecutive $k$-out-of-$n$ model, as observed in \cite{MSW18}, this low redundancy makes the problem computationally demanding when using the usual $k$-fold lcm ideal filtration for computation. In some cases, this computation may even exceed the cost of calculating the system's reliability. 

One way to address this challenge is to derive explicit formulas for the generators of these systems, as demonstrated in \cite{MSW17,MSW18}. However, these techniques are not easily adaptable to other types of systems. To overcome this limitation, we analyze the redundant and non-redundant regimes introduced in Section~\ref{sec:regime} and apply this framework also to circular consecutive $k$-out-of-$n$:F systems \cite{PB94,KZ02}, which cannot be efficiently studied using the approaches from \cite{MSW18}.

\bigskip

Figure~\ref{fig:ratios-KN} illustrates the behavior of three types of systems, namely $k$-out-of-$n$, consecutive linear $k$-out-of-$n$, and consecutive circular $k$-out-of-$n$ for $n = 15$ and $k = 2, \dots, 9$. The $k$-out-of-$n$ systems are characterized by $\binom{n}{k}$ generators with a high degree of overlap, resulting in a very low ratio between the size of the full poset on those generators and the size of the lcm-lattice of the ideal. This makes the usual lcm-filtration impractical for large values of $n$. 

In contrast, consecutive $k$-out-of-$n$ systems have fewer generators, leading to smaller total posets. For these systems, the ratio of the full poset size to the actual lcm-lattice size favors the stepwise version of the lcm-filtration when the overlap is moderate. However, for cases of very high or very low overlap, the usual lcm-filtration proves more efficient.

Lastly, the consecutive circular model has a constant number of generators, specifically $n$ generators, for all values of $k$. As the overlap, determined by $k$, increases, the ratio decreases. In this case, the stepwise lcm-filtration is the optimal choice, particularly
since there are no known formulas, 
as in \cite{MSW18}, for the number of generators and Betti numbers of common failure ideals in these systems.

\begin{center}
\begin{figure}[ht]
\includegraphics[scale=0.6]{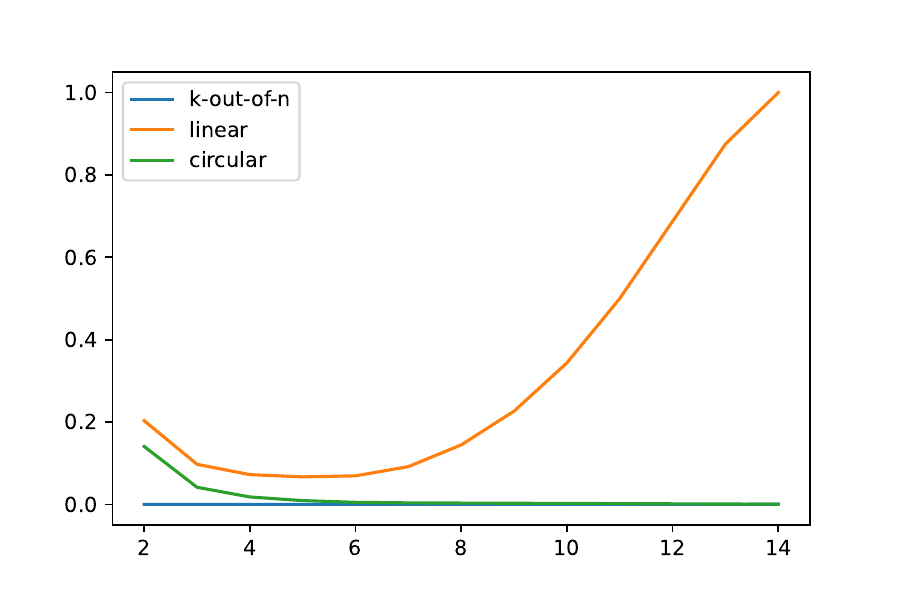} \\
\caption{Size of the lcm-lattice vs. $k$ for $k$-out-of-$15$ ideals, and their linear and circular consecutive variants.}
\label{fig:ratios-KN}
\end{figure}
\end{center}

\section{Sensitivity analysis}\label{sec:sensitivity}


Sensitivity analysis of models studies how different sources of uncertainty in the model inputs influence the uncertainty in the output. It is commonly applied in exploratory modeling, model evaluation, simplification, or refinement. In \cite{divason2023sensitivity}, the authors considered a multi-factor binary decision-making system, and analyzed such systems using an algebraic-combinatorial approach. Monomial ideals and their associated simplicial complexes, were used to identify the input combinations most critical to the decision and warranting careful review. These key combinations were found as the multidegrees at which the ideal representing the system has non-zero Betti numbers, referred to as {\em sensitive corners} of the model. To investigate the variations in the effects of different factor combinations, we analyzed the persistence homology \cite{CZ05} of the simplicial complexes at each sensitive corner, using the standard lcm-filtration.
The method presented in \cite{divason2023sensitivity} can be enhanced by using stepwise lcm-filtration in at least two ways. First, the correspondence between ideals and simplicial complexes in the filtration, outlined in Theorem~\ref{thm:simplicial}, facilitates the interpretation of factor interactions as simplicial complexes. Second, as observed in the signature analysis (Section~\ref{sec:simultaneous}), there is redundancy in the process, where certain simplicial complexes remain unchanged throughout the filtration. While the usual lcm-filtration does not handle this redundancy, the stepwise version efficiently manages it.

Consider a decision system that makes a binary decision based on the satisfaction levels of $n$ factors. The minimum acceptance points of the system are the combinations of factor scores such that any decrease in the satisfaction level of any factor leads to rejection. Let the polynomial ring be $\kb[\xb] := \kb[x_1, \dots, x_n]$. For this system $S$, we associate a monomial ideal $I_S \subseteq \kb[\xb]$, where each variable represents one factor. Each minimal generating monomial of $I_S$ corresponds to a minimal acceptance point of $S$. The multidegrees of $I_S$ at which interactions between minimal acceptance points occur (elements of the lcm-lattice of $I_S$) are referred to as the sensitivity corners of the system or model. At each of these sensitivity corners, with multidegree $\mu$, we can construct a local simplicial complex that models the multi-factor interactions.

\begin{definition}
Let $I$ be a monomial ideal and $\mu \in \mathbb{N}^n$. The upper and lower Koszul simplicial complexes at $x^\mu$ with respect to $I$ are defined as follows:
\[
K^\mu(I) = \{\tau \subset \supp(\mu) : \xb^{\mu - \tau} \in I\}
\]
\[
K_\mu(I) = \{\tau \subset \supp(\mu) : \xb^{\mu' + \tau} \notin I\}
\]
where $\mu'$ is obtained by subtracting one from each nonzero coordinate of $\mu$, i.e., $\mu' = \mu - \supp(\mu)$.
\end{definition}

The relationship between the Betti numbers of $I_S$ at $\mu$ and those of the corresponding Koszul simplicial complexes is given by the well-known Hochster's formula.
\begin{theorem}[cf. \cite{MS05}, Theorem 1.34 and Theorem 5.11] \label{th:HochsterKoszul}
Let $I$ be a monomial ideal and $\mu \in \mathbb{N}^n$. Then the Betti numbers of $I$ are:
\[
\beta_{i,\mu}(I)=\dim_{\kb}\tilde{H}_{i-1}(K^\mu(I))
\]
\[
\beta_{i-1,\mu}(I)=\dim_{\kb}\tilde{H}_{\vert \mu\vert -i-1}(K_\mu(I))
\]
\end{theorem}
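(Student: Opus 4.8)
The plan is to obtain both identities from the standard description of multigraded Betti numbers as the graded dimensions of $\mathrm{Tor}$, computed by means of the Koszul complex, and then to pass between the two Koszul simplicial complexes via combinatorial Alexander duality. Throughout I identify a subset $\sigma\subseteq\{1,\dots,n\}$ with its $0/1$ exponent vector, so that $\xb^{\mu-\sigma}$ carries the meaning it has in the definitions of $K^\mu(I)$ and $K_\mu(I)$, and I work over the field $\kb$, so that homology and cohomology of finite complexes have equal dimensions.

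For the first identity, write $S=\kb[\xb]$ and let $K_\bullet=K_\bullet(x_1,\dots,x_n;S)$ be the Koszul complex, i.e., the minimal free resolution of $\kb$ over $S$, with free basis elements $e_\sigma$ placed in multidegree $\sigma$ and differential $e_\sigma\mapsto\sum_{j\in\sigma}\pm\,x_j\,e_{\sigma\setminus j}$. Then $\beta_{i,\mu}(I)=\dim_{\kb}\mathrm{Tor}_i^S(I,\kb)_\mu=\dim_{\kb}H_i\big((I\otimes_S K_\bullet)_\mu\big)$. I would examine the strand $(I\otimes_S K_\bullet)_\mu$ in each homological degree: in degree $i$ it has a $\kb$-basis indexed by the subsets $\sigma$ with $|\sigma|=i$ and $\xb^{\mu-\sigma}\in I$ (which forces $\sigma\subseteq\supp(\mu)$), that is, by the $(i-1)$-dimensional faces of $K^\mu(I)$ (which is a simplicial complex, being downward closed since $\xb^{\mu-\tau'}$ is a multiple of $\xb^{\mu-\tau}$ whenever $\tau'\subseteq\tau$). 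The only point to verify is that under this dictionary the induced differential is exactly the reduced simplicial boundary of $K^\mu(I)$: each term $x_j\,e_{\sigma\setminus j}$ is present because $\xb^{\mu-(\sigma\setminus j)}$ is a multiple of $\xb^{\mu-\sigma}\in I$, and the Koszul signs coincide with the orientation signs once faces are oriented by the fixed order on $\{1,\dots,n\}$. Hence $(I\otimes_S K_\bullet)_\mu\cong\widetilde C_{\bullet-1}(K^\mu(I);\kb)$ as complexes, and passing to homology gives $\beta_{i,\mu}(I)=\dim_{\kb}\widetilde H_{i-1}(K^\mu(I);\kb)$.

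For the second identity I would first reduce to the squarefree case by polarization, which reduces the statement for $I$ at $\mu$ to the statement for the polarized (squarefree) ideal at the corresponding squarefree degree; thus $I=I_\Delta$ may be taken to be a Stanley--Reisner ideal and $\mu=\sigma$ squarefree, so $\mu'=0$. Two elementary observations then finish the argument. First, unwinding the definitions with the rule $\xb^{\sigma-\tau}\in I_\Delta\iff\sigma\setminus\tau\notin\Delta$ shows that $K_\sigma(I_\Delta)=\Delta|_\sigma$, the induced subcomplex on $\sigma$, whereas $K^\sigma(I_\Delta)=\{\tau\subseteq\sigma:\sigma\setminus\tau\notin\Delta|_\sigma\}=(\Delta|_\sigma)^\vee$, the combinatorial Alexander dual of $\Delta|_\sigma$ taken inside the full simplex on $\sigma$. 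Second, combinatorial Alexander duality (see \cite{MS05}) gives $\widetilde H_{j}\big(\Gamma^\vee;\kb\big)\cong\widetilde H_{\,|\sigma|-j-3}(\Gamma;\kb)$; applying this with $\Gamma=\Delta|_\sigma$ and $j=i-1$ and combining with the first identity yields $\beta_{i,\mu}(I)=\dim_{\kb}\widetilde H_{i-1}\big((\Delta|_\sigma)^\vee;\kb\big)=\dim_{\kb}\widetilde H_{\,|\sigma|-i-2}(\Delta|_\sigma;\kb)$. Re-indexing $i\mapsto i-1$ and using $|\sigma|=|\mu|$ together with $K_\mu(I)=\Delta|_\sigma$ gives $\beta_{i-1,\mu}(I)=\dim_{\kb}\widetilde H_{\,|\mu|-i-1}(K_\mu(I);\kb)$, as claimed.

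The routine-but-delicate points are the sign matching in the first step, best handled by fixing vertex orientations once and for all via the order on $\{1,\dots,n\}$, and, in the second step, the index bookkeeping in Alexander duality together with the degenerate cases: the duality statement must exclude the void complex and the full simplex, and one checks separately that in precisely those situations the relevant Betti number vanishes. I expect the main obstacle to be exactly this second-step bookkeeping --- getting the homological degree shift and the reduction to squarefree degrees entirely correct --- whereas the first identity is the genuinely direct computation.
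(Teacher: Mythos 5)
The paper does not prove this statement: it is quoted (``cf.\ \cite{MS05}, Theorem 1.34 and Theorem 5.11''), so the comparison can only be with the standard argument in the literature. Your proof of the first identity is that standard argument and is correct: the multidegree-$\mu$ strand of $I\otimes_S K_\bullet(x_1,\dots,x_n)$ has basis in homological degree $i$ indexed by the $(i-1)$-faces of $K^\mu(I)$, the Koszul differential is the simplicial boundary, and balancing $\mathrm{Tor}$ gives $\beta_{i,\mu}(I)=\dim_{\kb}\tilde{H}_{i-1}(K^\mu(I))$.

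The gap is in the second identity, namely in the polarization step. Polarization does preserve Betti numbers, $\beta_{i,\mu}(I)=\beta_{i,\tilde\mu}(\tilde I)$ with $\tilde\mu$ the squarefree degree corresponding to $\xb^\mu$, and your squarefree argument then yields $\beta_{i-1,\tilde\mu}(\tilde I)=\dim_{\kb}\tilde{H}_{|\tilde\mu|-i-1}(K_{\tilde\mu}(\tilde I))$. But that is a statement about $K_{\tilde\mu}(\tilde I)=\tilde\Delta|_{\tilde\mu}$, a complex on $\deg \xb^\mu$ vertices, not about $K_\mu(I)$, which lives on $\supp(\mu)$; these are genuinely different complexes and the homological indices differ by $\deg\xb^\mu-|\supp(\mu)|$. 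Concretely, for $I=\langle x^2\rangle$ and $\mu=(2)$ one has $K_\mu(I)=\{\emptyset\}$ with $\tilde{H}_{-1}=\kb$, while the polarized restricted complex is two isolated points with $\tilde{H}_{0}=\kb$. So ``reducing to the squarefree case'' does not by itself prove the stated formula at non-squarefree $\mu$: you would need an additional, unstated comparison of $\tilde{H}_\bullet(K_\mu(I))$ with $\tilde{H}_\bullet(\tilde\Delta|_{\tilde\mu})$ including the correct shift. The same example shows the theorem is only true if $|\mu|$ is read as $|\supp(\mu)|$ (with total degree it fails), a point your write-up leaves implicit. The simplest repair is to drop polarization altogether: your squarefree computation is really the observation that $K_\mu(I)$ is the Alexander dual of $K^\mu(I)$ inside the simplex on $\supp(\mu)$, and this holds verbatim for arbitrary $\mu$, since $\tau\in K_\mu(I)\iff \xb^{\mu'+\tau}=\xb^{\mu-(\supp(\mu)\setminus\tau)}\notin I\iff \supp(\mu)\setminus\tau\notin K^\mu(I)$. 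Combining this with your first identity and combinatorial Alexander duality on the vertex set $\supp(\mu)$ gives the second identity directly, with $|\mu|=|\supp(\mu)|$, and with the degenerate cases (void and full complexes) handled by the usual conventions.
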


We can now apply these tools to perform sensitivity analysis at each corner, either from an ideal-based or simplicial-based perspective. The ideal-based approach is typically more suited for computations, while the simplicial-based approach is better for interpreting the results.

In \cite{divason2023sensitivity}, the primary tool used is persistent homology of the simplicial complexes at the sensitive corners, based on the usual lcm-filtration. The main idea is the following: At each corner of the system ideal, we analyze the Koszul simplicial complex using the lcm or stepwise-lcm filtration of the ideal of the system. Based on this, we cluster these corners to perform a detailed analysis. The analysis in \cite{divason2023sensitivity} (see Section 5 there) shows that this clustering groups the complexes based on size related features, such as number of vertices or number of simplices. This analysis based on the system ideal measures the local effect (at each corner) of interactions between the factors of the minimal acceptance points. In this case, the stepwise filtration offers an efficient alternative to the usual lcm one. However, to distinguish the behaviour of the corners within each cluster, a more detailed local analysis is needed. This can be done by means of the persistence homology of the local simplicial complexes themselves, and in this case, the lcm filtration is advantageous, since it amplifies the distances between persistence diagrams with respect to those obtained . The following example illustrates this idea.

Consider the complexes in Figure~\ref{fig:complexesSameF-B}, which have the same f-vector: $(1,7,7)$ and same Betti numbers: $\beta_0=2$, $\beta_1=2)$. We apply the lcm filtration to them and obtain the persistence diagrams in Figure~\ref{fig:persistenceSameFB}. Applying the stepwise filtration we obtain the diagrams in Figure~\ref{fig:persistenceSWSameFB}, which are much shorter and similar to each other. This can be observed using the distance between persistence diagrams. Tables \ref{table:bottleneck} and \ref{table:wasserstein} show the Bottleneck and Wasserstein distances \cite{CEH07,EH10} between the persistence diagrams of the four complexes in Figure~\ref{fig:complexesSameF-B}. We can see the amplification effect of the lcm-filtration, which is more evident when we use the Wasserstein distance.

\begin{figure}
\begin{tikzpicture}[scale=1]
\filldraw[black] (1,-1) circle (2pt) node(B)[anchor=east]{1};
\filldraw[black] (2,-1) circle (2pt) node(D)[anchor=west]{2};
\filldraw[black] (1,-2) circle (2pt) node(A)[anchor=east]{3};
\filldraw[black] (2,-2) circle (2pt) node(C)[anchor=west]{4};
\filldraw[black] (1,-3) circle (2pt) node(B)[anchor=east]{5};
\filldraw[black] (2,-3) circle (2pt) node(D)[anchor=west]{6};
\filldraw[black] (1,-4) circle (2pt) node(A)[anchor=east]{7};

\draw[black, thick] (1,-1) -- (2,-1);
\draw[black, thick] (1,-1) -- (1,-2);
\draw[black, thick] (2,-1) -- (1,-2);
\draw[black, thick] (1,-2) -- (2,-2);
\draw[black, thick] (1,-3) -- (2,-3);
\draw[black, thick] (1,-3) -- (1,-4);
\draw[black, thick] (2,-3) -- (1,-4);

\filldraw[black] (3,-1) circle (2pt) node(B)[anchor=east]{1};
\filldraw[black] (4,-1) circle (2pt) node(D)[anchor=west]{2};
\filldraw[black] (3,-2) circle (2pt) node(A)[anchor=east]{3};
\filldraw[black] (4,-2) circle (2pt) node(C)[anchor=west]{4};
\filldraw[black] (3,-3) circle (2pt) node(B)[anchor=east]{5};
\filldraw[black] (4,-3) circle (2pt) node(D)[anchor=west]{6};
\filldraw[black] (3,-4) circle (2pt) node(A)[anchor=east]{7};

\draw[black, thick] (3,-1) -- (4,-1);
\draw[black, thick] (3,-1) -- (3,-2);
\draw[black, thick] (4,-1) -- (3,-2);
\draw[black, thick] (3,-2) -- (4,-2);
\draw[black, thick] (4,-1) -- (4,-2);
\draw[black, thick] (3,-3) -- (4,-3);
\draw[black, thick] (4,-3) -- (3,-4);

\filldraw[black] (5,-1) circle (2pt) node(B)[anchor=east]{1};
\filldraw[black] (6,-1) circle (2pt) node(D)[anchor=west]{2};
\filldraw[black] (5,-2) circle (2pt) node(A)[anchor=east]{3};
\filldraw[black] (6,-2) circle (2pt) node(C)[anchor=west]{4};
\filldraw[black] (5,-3) circle (2pt) node(B)[anchor=east]{5};
\filldraw[black] (6,-3) circle (2pt) node(D)[anchor=west]{6};
\filldraw[black] (5,-4) circle (2pt) node(A)[anchor=east]{7};

\draw[black, thick] (5,-1) -- (6,-1);
\draw[black, thick] (5,-1) -- (5,-2);
\draw[black, thick] (5,-2) -- (6,-2);
\draw[black, thick] (6,-1) -- (6,-2);
\draw[black, thick] (5,-2) -- (5,-3);
\draw[black, thick] (6,-2) -- (5,-3);
\draw[black, thick] (6,-3) -- (5,-4);

\filldraw[black] (7,-1) circle (2pt) node(B)[anchor=east]{1};
\filldraw[black] (8,-1) circle (2pt) node(D)[anchor=west]{2};
\filldraw[black] (7,-2) circle (2pt) node(A)[anchor=east]{3};
\filldraw[black] (8,-2) circle (2pt) node(C)[anchor=west]{4};
\filldraw[black] (7,-3) circle (2pt) node(B)[anchor=east]{5};
\filldraw[black] (8,-3) circle (2pt) node(D)[anchor=west]{6};
\filldraw[black] (7,-4) circle (2pt) node(A)[anchor=east]{7};

\draw[black, thick] (7,-1) -- (8,-1);
\draw[black, thick] (8,-1) -- (7,-2);
\draw[black, thick] (7,-2) -- (8,-2);
\draw[black, thick] (8,-1) -- (8,-2);
\draw[black, thick] (7,-2) -- (8,-3);
\draw[black, thick] (8,-2) -- (8,-3);
\draw[black, thick] (7,-3) -- (8,-3);
\end{tikzpicture}
\caption{Four different simplicial complexes on $7$ vertices with same f-vector and Betti numbers.}
\label{fig:complexesSameF-B}
\end{figure}
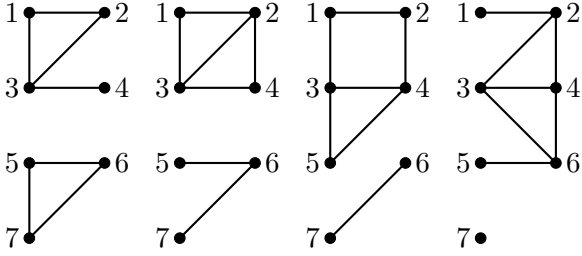

\begin{center}
\begin{figure}[ht]
\subfloat{%
\begin{tabular}{c}
\includegraphics[scale=0.22]{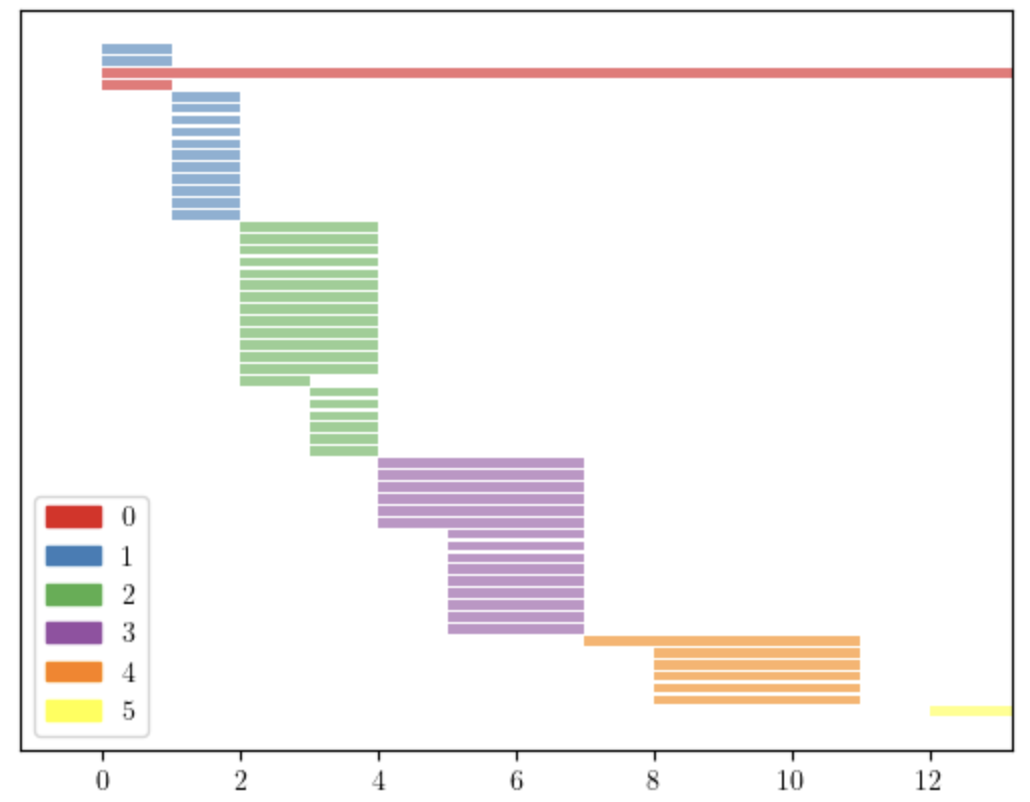} \\
\includegraphics[scale=0.22]{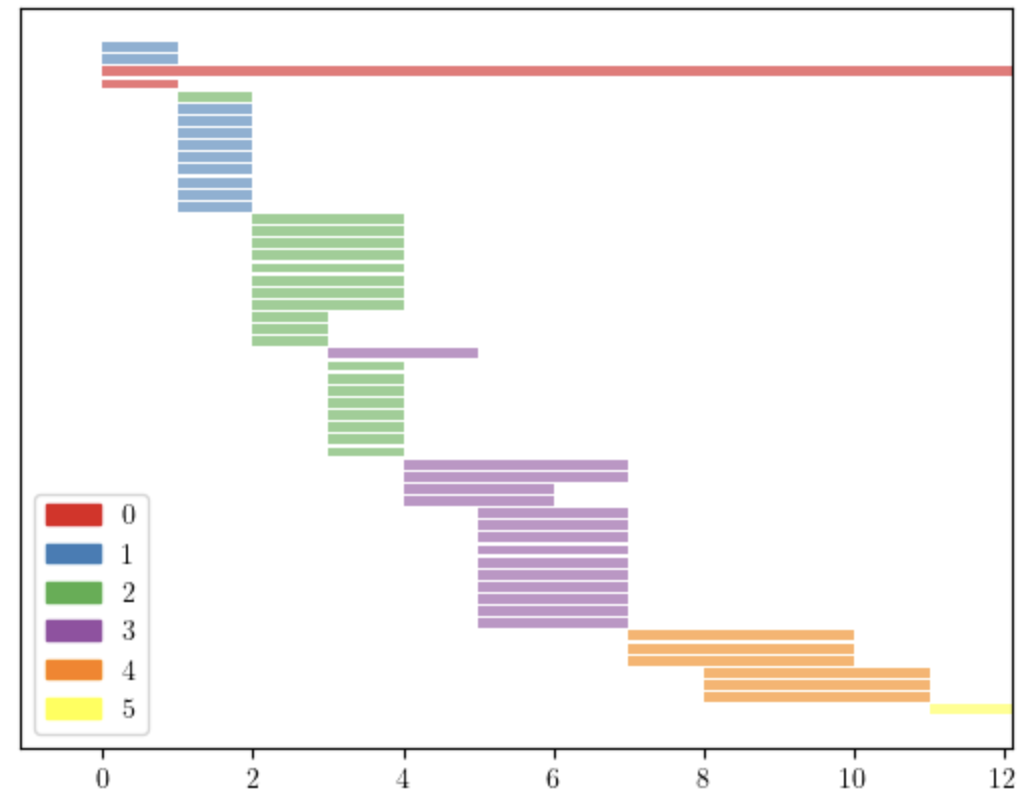}\\
\end{tabular}
}%
\subfloat{%
\begin{tabular}{c}
\includegraphics[scale=0.22]{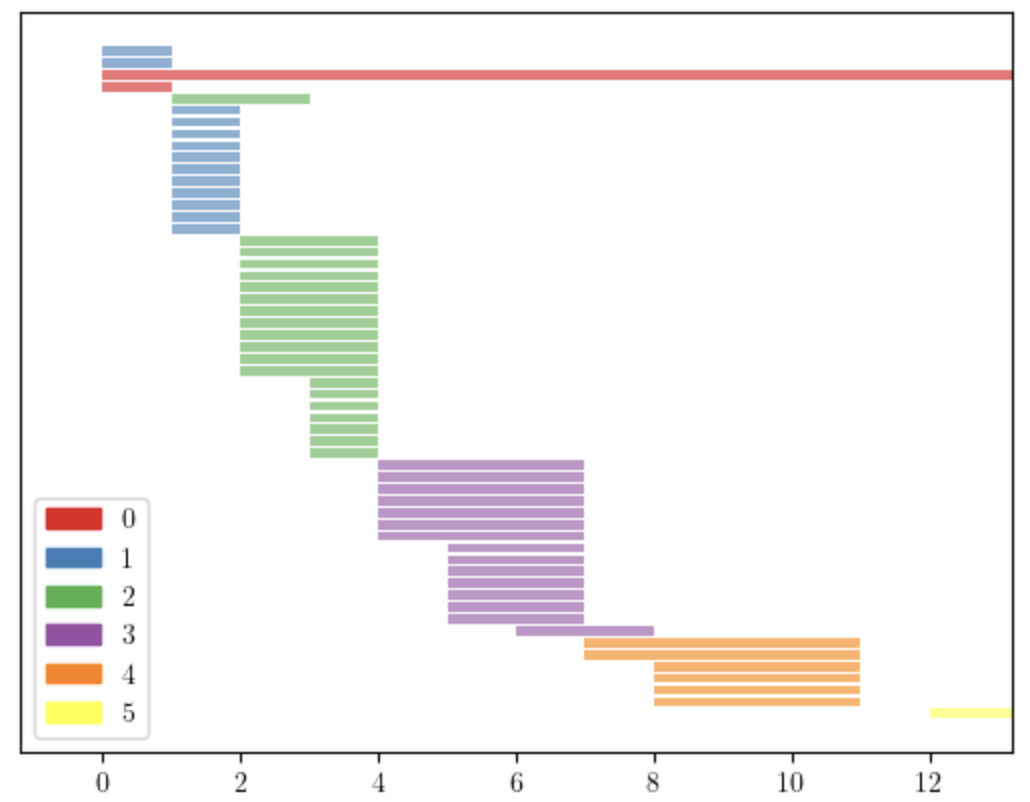} \\
\includegraphics[scale=0.22]{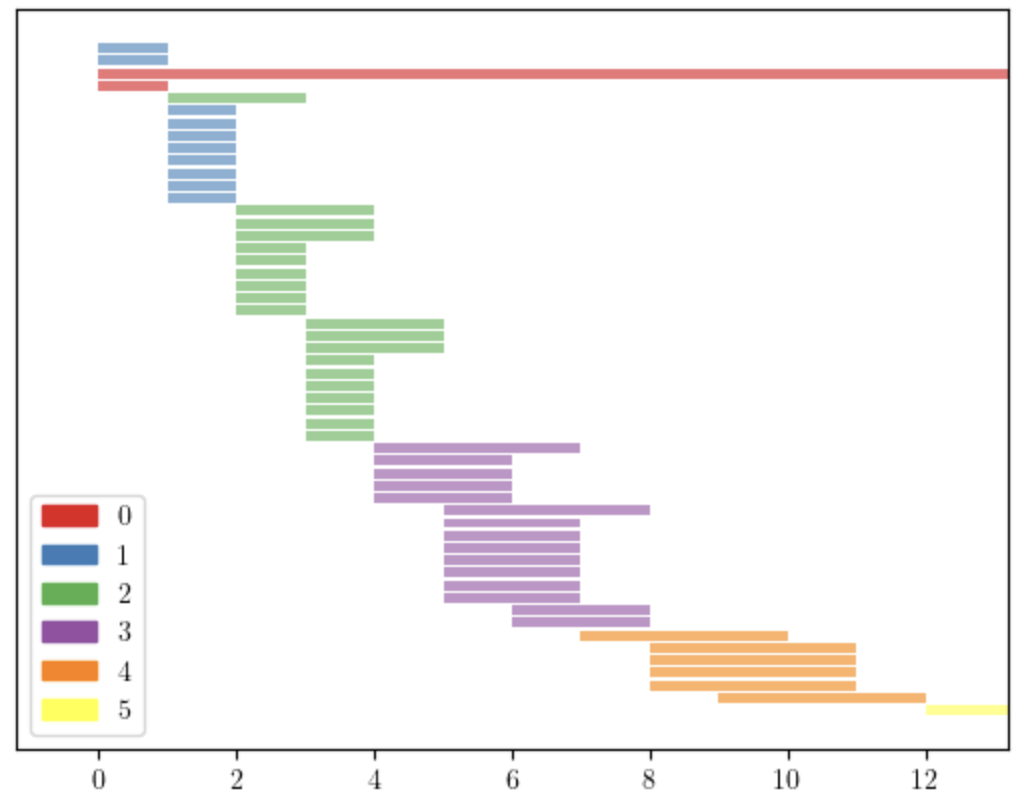}\\
\end{tabular}
}%
\caption{Persistence diagrams of four simplicial complexes with same f-vector and Betti numbers (see Fig. \ref{fig:complexesSameF-B}). Diagrams obtained using the lcm-filtration.}
\label{fig:persistenceSameFB}
\end{figure}
\end{center}

\begin{center}
\begin{figure}[ht]
\subfloat{%
\begin{tabular}{c}
\includegraphics[scale=0.22]{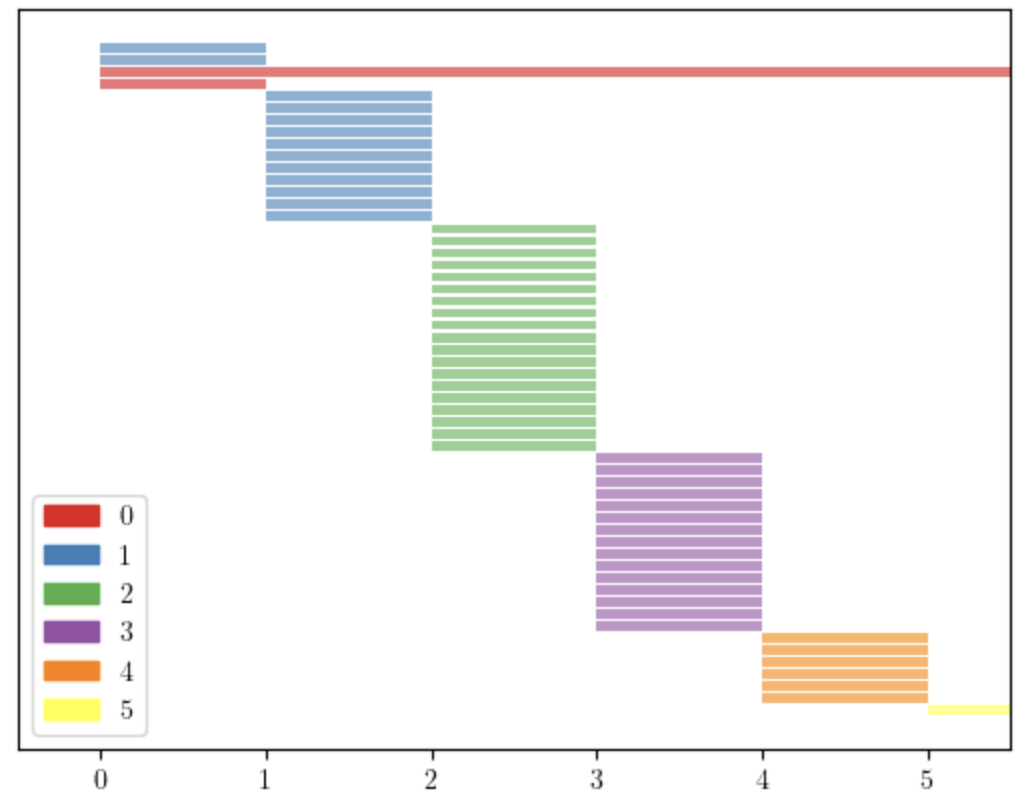} \\
\includegraphics[scale=0.22]{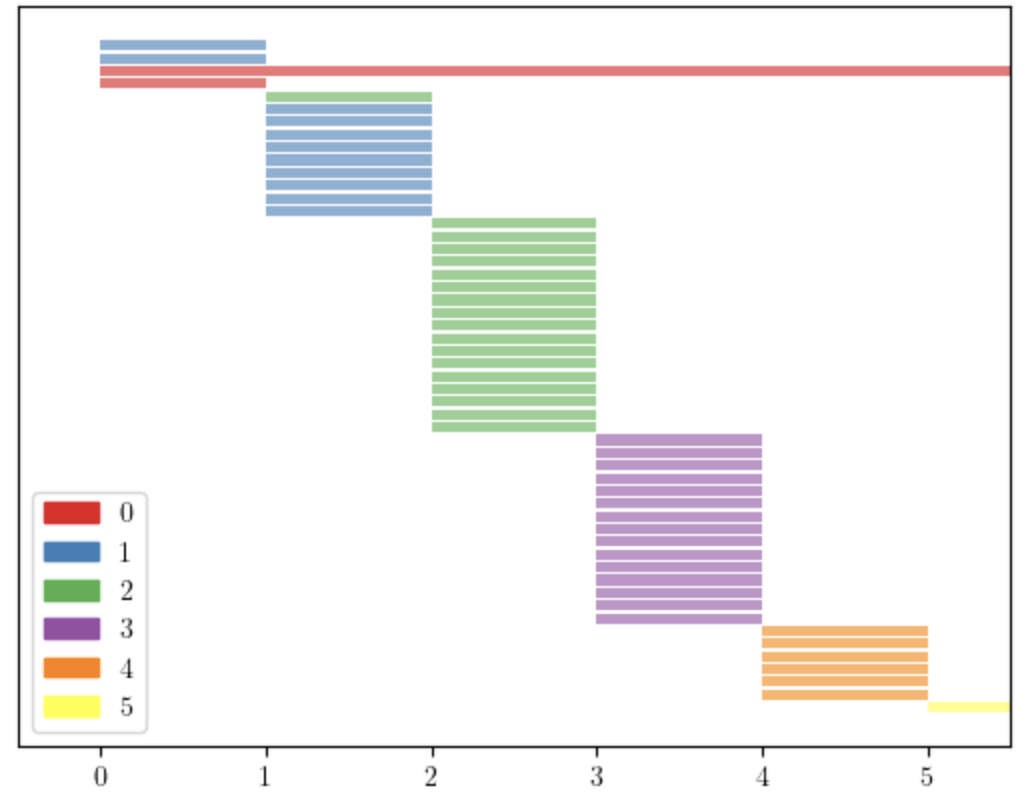}\\
\end{tabular}
}%
\subfloat{%
\begin{tabular}{c}
\includegraphics[scale=0.22]{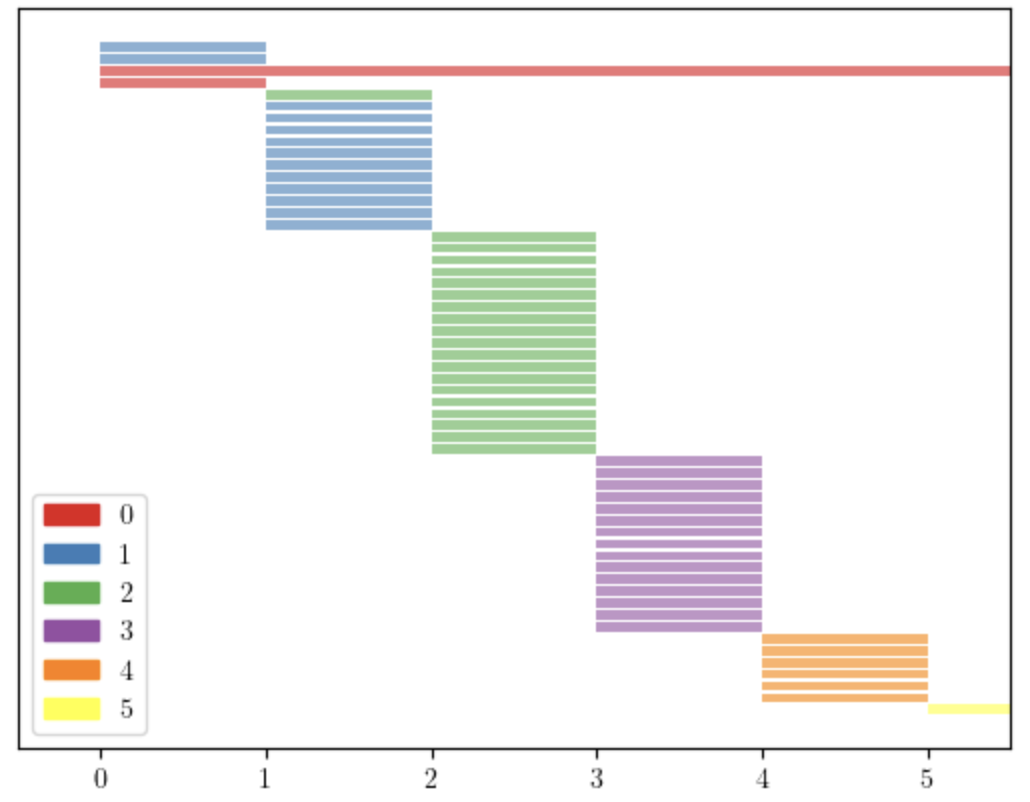} \\
\includegraphics[scale=0.22]{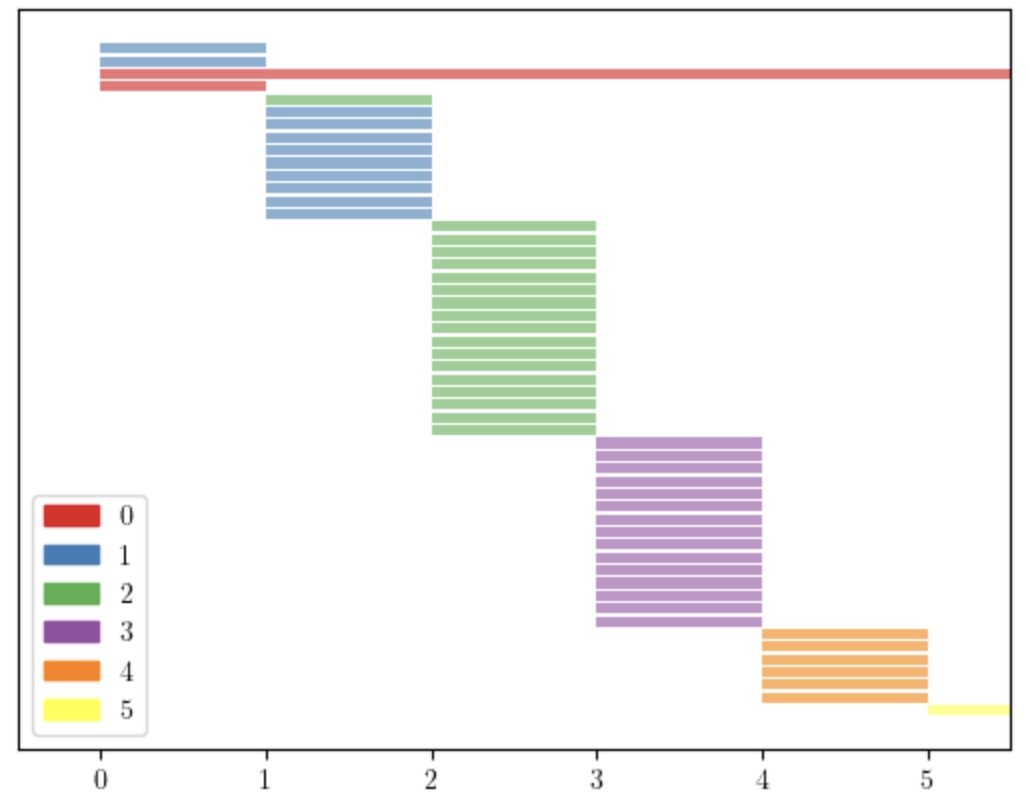}\\
\end{tabular}
}%
\caption{Persistence diagrams of four simplicial complexes with same f-vector and Betti numbers (see Fig. \ref{fig:complexesSameF-B}). Diagrams obtained using the stepwise lcm-filtration.}
\label{fig:persistenceSWSameFB}
\end{figure}
\end{center}

\begin{table}[h]
\begin{tabular}{ cc }   
{\bf Bottleneck Distance} & {\bf Wasserstein Distance} \\  
 \begin{tabular}{c| c c c c} 
 \hline
 &$C_1$ & $C_2$ & $C_3$ & $C_4$ \\ [0.5ex] 
 \hline
 $C_1$ & 0 & 3.0 & 4.5 &3.5\\ 
 $C_2$ & 3.0 & 0 & 4.5 &3.5\\
 $C_3$ & 4.5 & 4.5 & 0 &4.5 \\
 $C_4$ & 3.5 & 3.5 & 4.5 &0\\
 \hline
 \end{tabular} &  
 \begin{tabular}{c| c c c c} 
 \hline
 &$C_1$ & $C_2$ & $C_3$ & $C_4$ \\ [0.5ex] 
 \hline
 $C_1$ & 0 & 5.0 & 15.5 &20.5\\ 
 $C_2$ & 5.0 & 0 & 17.0 &20.5\\
 $C_3$ & 15.5 & 17.0 & 0 &16.5 \\
 $C_4$ & 20.5 & 20.5 & 16.5 &0\\
 \hline
 \end{tabular} \\
\end{tabular}
 \caption{Bottleneck and Wasserstein distances between persistence diagrams of the simplicial complexes in Figure~\ref{fig:complexesSameF-B} using the lcm-filtration}
 \label{table:bottleneck}
\end{table}

\begin{table}[h]
\begin{tabular}{ cc }   
{\bf Bottleneck Distance} & {\bf Wasserstein Distance} \\  
 \begin{tabular}{c| c c c c} 
 \hline
 &$C_1$ & $C_2$ & $C_3$ & $C_4$ \\ [0.5ex] 
 \hline
 $C_1$ & 0 & 0.5 & 1.0 &1.0\\ 
 $C_2$ & 0.5 & 0 & 1.0 &1.0\\
 $C_3$ & 1.0 & 1.0 & 0 &0 \\
 $C_4$ & 1.0 & 1.0& 0 &0\\
 \hline
 \end{tabular} &  
 \begin{tabular}{c| c c c c} 
 \hline
 &$C_1$ & $C_2$ & $C_3$ & $C_4$ \\ [0.5ex] 
 \hline
 $C_1$ & 0 & 0.5 & 2.5 &2.5\\ 
 $C_2$ & 0.5 & 0 & 2.0 &2.0\\
 $C_3$ & 2.5 & 2.0 & 0 &0 \\
 $C_4$ & 2.5 & 2.0 & 0 &0\\
 \hline
 \end{tabular} \\
\end{tabular}
 \caption{Bottleneck and Wasserstein distances between persistence diagrams of the simplicial complexes in Figure~\ref{fig:complexesSameF-B} using the stepwise-filtration}
\label{table:wasserstein}
\end{table}

\section{Conclusions}

In this paper, we introduced two tools for analyzing the least common multiple (lcm) structure of monomial ideals: the lcm-filtration and the stepwise filtration. Both filtrations provide information about the structure of these ideals, as well as the associated simplicial complexes, with broad applicability in various contexts. Specifically, we have explained their use in cut ideals of graphs, simultaneous failures in coherent systems, and sensitivity analysis. In each case, these filtrations allowed us to study the balance between redundancy, efficiency, and information. The tools and applications presented open several avenues for future research on the application of monomial ideals and simplicial complexes in various fields contexts.  

\section*{Acknowledgements}
The research is partially supported by the project INICIA2023/02 funded by La Rioja Government (Spain), as well as the KU Leuven grant iBOF/23/064, and the FWO grants G0F5921N and G023721N.

\bibliographystyle{abbrv}
\bibliography{sample-base}
\end{document}